\newcommand{\AppendixSymbol}{\ding{72}}
	\NewDocumentEnvironment{prooflater}{m}{\begin{proof}}{\end{proof}\ignorespacesafterend}
	\NewDocumentEnvironment{proofsketch}{o +b}{}{\ignorespacesafterend}
	\newcommand{\restateref}[1]{}
	\NewDocumentEnvironment{statelater}{m}{}{}
	\NewDocumentCommand{\onlyShort}{+m}{}
	\NewDocumentCommand{\onlyLong}{+m}{#1}
	\NewDocumentCommand{\shortLong}{+m +m}{#2}
	\NewDocumentEnvironment{prooflater}{m +b}{%
		\expandafter\global\expandafter\def\csname#1\endcsname{\begin{proof}#2\end{proof}}%
	}{\ignorespacesafterend}
	\NewDocumentEnvironment{proofsketch}{O{Proof sketch.}}{\begin{proof}[#1]}{\end{proof}\ignorespacesafterend}
	\newcommand{\restateref}[1]{[\IfAppendix{\AppendixSymbol{}}{\AppendixSymbol{}}]}
	\NewDocumentEnvironment{statelater}{m +b}{%
		\expandafter\global\expandafter\def\csname#1\endcsname{#2}%
	}{\ignorespacesafterend}
	\NewDocumentCommand{\onlyShort}{+m}{#1}
	\NewDocumentCommand{\onlyLong}{+m}{}
	\NewDocumentCommand{\shortLong}{+m +m}{#1}
\title{Visualizing Treewidth}
\author{Alvin Chiu}{University of California, Irvine, USA}{chiua13@uci.edu}{https://orcid.org/0009-0009-6863-859X}{}
\author{Thomas Depian}{TU Wien, Vienna, Austria}{tdepian@ac.tuwien.ac.at}{https://orcid.org/0009-0003-7498-6271}{Supported by Vienna Science and Technology Fund (WWTF) [10.47379/ICT22029].}
\author{David Eppstein}{University of California, Irvine, USA}{eppstein@uci.edu}{}{Supported by NSF grant 2212129.}
\author{Michael T. Goodrich}{University of California, Irvine, USA}{goodrich@uci.edu}{https://orcid.org/0000-0002-8943-191X}{Supported by NSF grant 2212129.}
\author{Martin N{\"o}llenburg}{TU Wien, Vienna, Austria}{noellenburg@ac.tuwien.ac.at}{https://orcid.org/0000-0003-0454-3937}{Supported by Vienna Science and Technology Fund (WWTF) [10.47379/ICT22029].}
\authorrunning{A. Chiu, T. Depian, D. Eppstein, M.\,T. Goodrich and M. N{\"o}llenburg} %
\keywords{Graph drawing, witness drawings, pathwidth, treewidth} %
\let\oldrestatable\restatable
\def\restatable{\expandafter\oldrestatable}
\NewDocumentCommand{\DecWidth}{s o}{\ensuremath{\IfBooleanTF{#1}{\overline{\omega}}{\omega}\IfNoValueF{#2}{_{#2}}}\xspace}
\NewDocumentCommand{\Drawing}{o}{\ensuremath{\Gamma\IfNoValueF{#1}{_{#1}}}\xspace}
\NewDocumentCommand{\SymbolLinear}{o}{\ensuremath{\textsf{L}\IfNoValueF{#1}{{\textsf{#1}}}}\xspace}
\NewDocumentCommand{\SymbolCircular}{}{\ensuremath{\textsf{C}}\xspace}
\NewDocumentCommand{\SymbolOrbital}{}{\ensuremath{\textsf{O}}\xspace}
\NewDocumentCommand{\SymbolTTCrossing}{}{\ensuremath{\textsf{t}/\textsf{t}}\xspace}
\NewDocumentCommand{\SymbolTECrossing}{}{\ensuremath{\textsf{t}/\textsf{e}}\xspace}
\NewDocumentCommand{\SymbolEECrossing}{}{\ensuremath{\textsf{e}/\textsf{e}}\xspace}
\NewDocumentCommand{\CrossingCountTT}{m}{\ensuremath{\text{cr}_{\textsf{t}/\textsf{t}}(#1)}\xspace}
\NewDocumentCommand{\CrossingCountTE}{m}{\ensuremath{\text{cr}_{\textsf{t}/\textsf{e}}(#1)}\xspace}
\NewDocumentCommand{\CrossingCountEE}{m}{\ensuremath{\text{cr}_{\textsf{e}/\textsf{e}}(#1)}\xspace}
\NewDocumentCommand{\CrossingCount}{m}{\ensuremath{\text{cr}(#1)}\xspace}
\newcommand{\Size}[1]{\ensuremath{\left\vert #1 \right\vert}}
\newcommand{\BigO}[1]{\ensuremath{\mathcal{O}(#1)}}
\renewcommand{\emph}[1]{\textbf{\textit{#1}}}
\newcommand{\probname}[1]{{\normalfont\textsc{#1}}}
\begin{document}

\maketitle

\begin{abstract}
A \emph{witness drawing} of a graph is a visualization that clearly shows a given property of a graph.
We study and implement various drawing paradigms for witness drawings to clearly show that graphs have bounded pathwidth or treewidth. 
Our approach draws the tree decomposition or path decomposition as a tree of bags, with induced subgraphs shown in each bag, and with ``tracks'' for each graph vertex connecting its copies in multiple bags. Within bags, we optimize the vertex layout to avoid crossings of edges and tracks.
We implement a visualization prototype for crossing minimization using dynamic programming for graphs of small width and 
heuristic approaches for graphs of larger width.
We introduce a taxonomy of drawing styles, which render the subgraph for each bag as an arc diagram with one or two pages or as a circular layout with
straight-line edges, and we render tracks either with
straight lines or with orbital-radial paths.
\end{abstract}

\section{Introduction}
Work in graph drawing can be often viewed as studying paradigms for visualizing a graph,~$G$, to clearly illustrate one or more 
properties of $G$ while also optimizing one or more quality criteria, such as area or number of edge crossings;
see, e.g., \cite{di1999graph,tamassia2013handbook,ellson2002graphviz,junger2012graph,didimo2019survey}.
This viewpoint is related to the concept of a \emph{visual proof}, which is a proof given as a 
visual representation called a \emph{witness} (or visual certificate); see, e.g., \cite{mehlhorn1996checking,MCCONNELL2011119,forster_et_al:LIPIcs.GD.2024.16,MNS+.Cgp.1996}.
Ideally, a visual proof should be clear and concise: the property being proven should be immediately 
discernible simply by examining the witness. 
A prominent example is a crossing-free drawing as a witness for planarity. In this paper,
we are interested in visualizing treewidth.

A \emph{tree decomposition} of a graph, $G=(V,E)$, is a tree, $T$, with nodes, $V_1,V_2,\ldots, V_k$, which are 
called \emph{bags}, where 
\begin{enumerate}
    \item Each $V_i$ is a subset of $V$ and $\bigcup_{i=1}^k V_i = V$.
    \item If $v\in V_i$ and $v\in V_j$, then $v$ is in each bag in the (unique) path in $T$ from $V_i$ to $V_j$.
    \item For each edge $(v,w)\in E$ there is a bag that contains both $v$ and $w$.
 \end{enumerate} 
 The \emph{width} of a tree decomposition is one less than the size of its largest bag, 
 and the \emph{treewidth} of a graph is the smallest width
 of a tree decomposition.
 The \emph{pathwidth} of a graph is the width of a smallest-width tree decomposition whose tree is a path.
See, e.g., \cite{Bod.TCA.2006,MSJ.EST.2019,korach1993tree,eppstein2000diameter,dujmovic2015genus,doi:10.1137/S0097539702416141} %
for more details regarding these topics, including applications in graph drawing and results that
many \NP-hard optimization problems can be solved in polynomial time for graphs with bounded treewidth or pathwidth.

We are interested in paradigms for witness drawings of graphs to certify their bounded pathwidth 
or treewidth.
Our approaches are inspired by an example visual proof from Wikipedia that a graph has treewidth~3; see \cref{fig:witness-drawing-wikipedia}. 
In this illustration, each bag is drawn as a disk with its member vertices drawn as points
inside the disk.
In addition, for each vertex, $v\in V$, in this drawing,
there is a set of curves, each of which we call a \emph{track}, that connect the different copies of $v$ in the bags
of the tree decomposition, i.e., the \emph{support} of $v$ in $T$.
The set of tracks in $T$ for a vertex $v\in V$ 
in such a drawing will form a subtree in $T$, which will always be
a path if $T$ is itself a path. Note, however, that even if the support for every vertex is a path, this does
not imply that $T$ is a path, e.g., as shown in \cref{fig:witness-drawing-wikipedia}.

\begin{figure}[t]
	\centering
	\includegraphics[page=1,scale=.8]{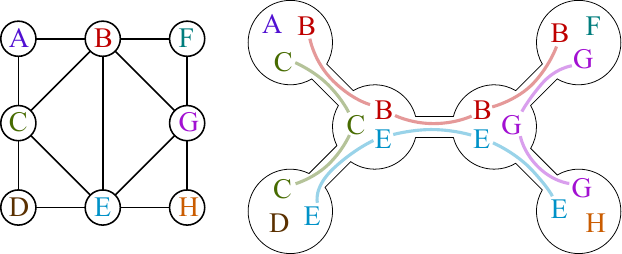}
	\caption{A witness drawing of a graph with treewidth~3. Tracks are shown color coded to illustrate the support
      for each vertex in the graph. Note that this visualization does not 
      draw graph edges inside the bags of the tree decomposition. 
      Public domain images by David Eppstein~\cite{wikitree}.}
	\label{fig:witness-drawing-wikipedia}
\end{figure}

To provide a witness drawing for the pathwidth or treewidth of a graph, $G=(V,E)$, we add one more requirement to our visualization, which is missing from the visualization shown in \Cref{fig:witness-drawing-wikipedia}, 
{%
namely \emph{information faithfulness}, i.e., that the drawing represents the (entire) graph $G$ and its decomposition $T$.
In particular, we require} that for each edge, $(v,w)\in E$, we must draw $(v,w)$ as a curve inside each bag $V_i$ such that
$v,w\in V_i$ in the given tree decomposition, $T$, of $G$.
Still, we are interested in such witness drawings that minimize edge and track crossings, 
as in \cref{fig:witness-drawing-wikipedia}.

{%
To see why this requirement is important, consider a graph $G$ with pathwidth at most \DecWidth.
Its \emph{interval-width} is bounded by $\DecWidth + 1$, i.e., $G$ is a subgraph of an interval graph $G'$ whose largest clique has size at most $\DecWidth + 1$~\cite{CFK+.PA.2015}.
Since interval graphs can be represented as closed intervals on the real line, such a representation of $G'$ can be seen as a certificate that $G$ has bounded pathwidth.
However, we argue that this is not an information faithful witness drawing as it shows $G'$ rather than $G$ and, in particular, does not allow a reconstruction of~$G$ from the (interval) drawing of $G'$.
}

\subparagraph*{Related Work.}
Mehlhorn {\it et al.}~\cite{mehlhorn1996checking} and
McConnell %
{\it et al.}~\cite{MCCONNELL2011119} introduce the paradigm of 
\emph{certifying algorithms}, which output their result as well as a concise proof, called a ``witness'',
that shows that the algorithm produced this output correctly. Subsequent to this work, there
has been considerable work on certifying algorithms, including their inclusion in the LEDA and CGAL 
systems~\cite{hoffmann2017two}.
For example, additional work for witness drawings in graph drawing include work on visualizing proximity properties,
such as for Delaunay graphs, Gabriel graphs, and rectangle graphs; see, 
e.g.,~\cite{aronov2014witness,LENHART2023114115,ARONOV2013894,ARONOV2011329,kratsch2006certifying,ADH.WRG.2011,KMMS.Car.2003,LL.MWG.2023}.

In GD 2024, %
the \emph{GraphTrials} framework was introduced,
which involves a visual proof for a graph property that can be used in an interactive proof modeled after a bench trial
before a judge where a prover establishes that a graph has a given property based on its visualization~\cite{forster_et_al:LIPIcs.GD.2024.16}.
Although %
the authors did not include pathwidth or treewidth in their GraphTrials framework,
we feel that our approach to producing witness drawings for graphs with bounded pathwidth or treewidth 
nevertheless fits into their GraphTrials
framework.
{%
In particular, in addition to faithfully representing the provided information, we argue that our created witness drawings contain all the necessary information to efficiently certify that the graph at hand has bounded pathwidth or treewidth.
Moreover, such drawings can facilitate the formation of a mental model for potential judges.
Thus, we believe that the drawings satisfy the required properties for visual certificates as defined in the framework~\cite{forster_et_al:LIPIcs.GD.2024.16}.
}

We are interested in the natural optimization criterion of minimizing edge
crossings in a witness drawing for a graph with bounded pathwidth or treewidth, including track-track crossings,
edge-edge crossings, and track-edge crossings.
{%
In particular, few track-track crossings can aid in verifying that our decomposition satisfies Property 2, i.e., that the bags containing the vertex $v$ induce a connected subgraph of $T$, for all $v \in V$.
Furthermore, minimizing edge-edge and track-edge crossings improves the visual quality of the witness drawing and can thus support the verification of the (remaining) properties.}
Although we are not aware of any prior work on this criterion, minimizing track-track crossings is
related to minimizing crossings in storyline 
drawings~\cite{dobler_et_al:LIPIcs.GD.2024.31,kostitsyna2015minimizing,gronemann2016crossing,van2017block,van2017computing} 
and metro maps~\cite{bekos2008line,argyriou2010metro,asquith2008ilp,fp-mcmhatc-13,n-iamlc-10,bnuw-miecw-07,pnbh-erwob-12,PNBH.ERO.2012}. 
Minimization of edge-edge crossings within the bags is related to crossing minimization in linear and circular graph layouts~\cite{bb-crcl-04,w-dvss-02,KMN.EEB.2018,gk-icl-06,MKNF.Ncc.1987,BE.CM1.2014,DW.LLG.2004}.

Multiple past works include individual visualizations of tree decompositions. These include drawings of a tree with each bag shown as a set of vertices within each tree node, separate from any drawing of the graph~\cite{Bod.TCA.2006,MSJ.EST.2019}, and drawings of a graph overlaid by bags drawn as regions that surround or connect the vertices~\cite{Epp-NAMS-25}. Separately from their applications in treewidth and structural graph theory, tree decompositions with bags of non-constant size have also been used in other ways: for instance, SPQR decompositions are, essentially, tree decompositions of adhesion 2, whose bags induce 3-connected subgraphs, and these have often been visualized as a tree of 3-connected components~\cite{DiBTam-Algo-96,Mut-ICALP-03}. However, we are unaware of past works studying visualizations of tree decompositions in any systematic way.

\subparagraph*{Our Results.}
In this paper, we systematize and implement witness drawings for
graphs with bounded treewidth or pathwidth.
Given a graph, $G=(V,E)$,
our approach is to draw the vertices in each bag of a tree decomposition, $T$, of $G$
as points in a disk, either in a 
circular layout or a linear layout.
For each vertex, $v\in V$,
we connect each pair of copies of $v$ 
in adjacent bags in~$T$ with a curve we call a \emph{track}, ideally with
all the tracks color coded with the same color (the color for $v$).
For each bag, $V_i$, in $T$, we draw the edges of the subgraph of $G$ induced
by~$V_i$. In the case of a linear layout of vertices, we draw this subgraph
as an arc diagram~\cite{w-dvss-02} (either as one-page or two-page book drawing) and in the case
of a circular layout of vertices, we draw this subgraph as a circular layout with straight-line edges~\cite{bb-crcl-04}.
Our optimization goal is to minimize the total number of edge crossings,
including track-track crossings, edge-edge crossings, and track-edge crossings.
See Figure~\ref{fig:witness-3-decomp} for three different witness drawing styles.
We design and implement fixed-parameter-tractable linear-time crossing minimization algorithms
for graphs with bounded pathwidth or treewidth.

\onlyShort{
\smallskip
\noindent
Missing details for statements marked by \AppendixSymbol{} can be found in %
{%
the full version~\cite{ARXIV}}.
}

\begin{figure}
	\centering
	\includegraphics[width=\linewidth,page=4]{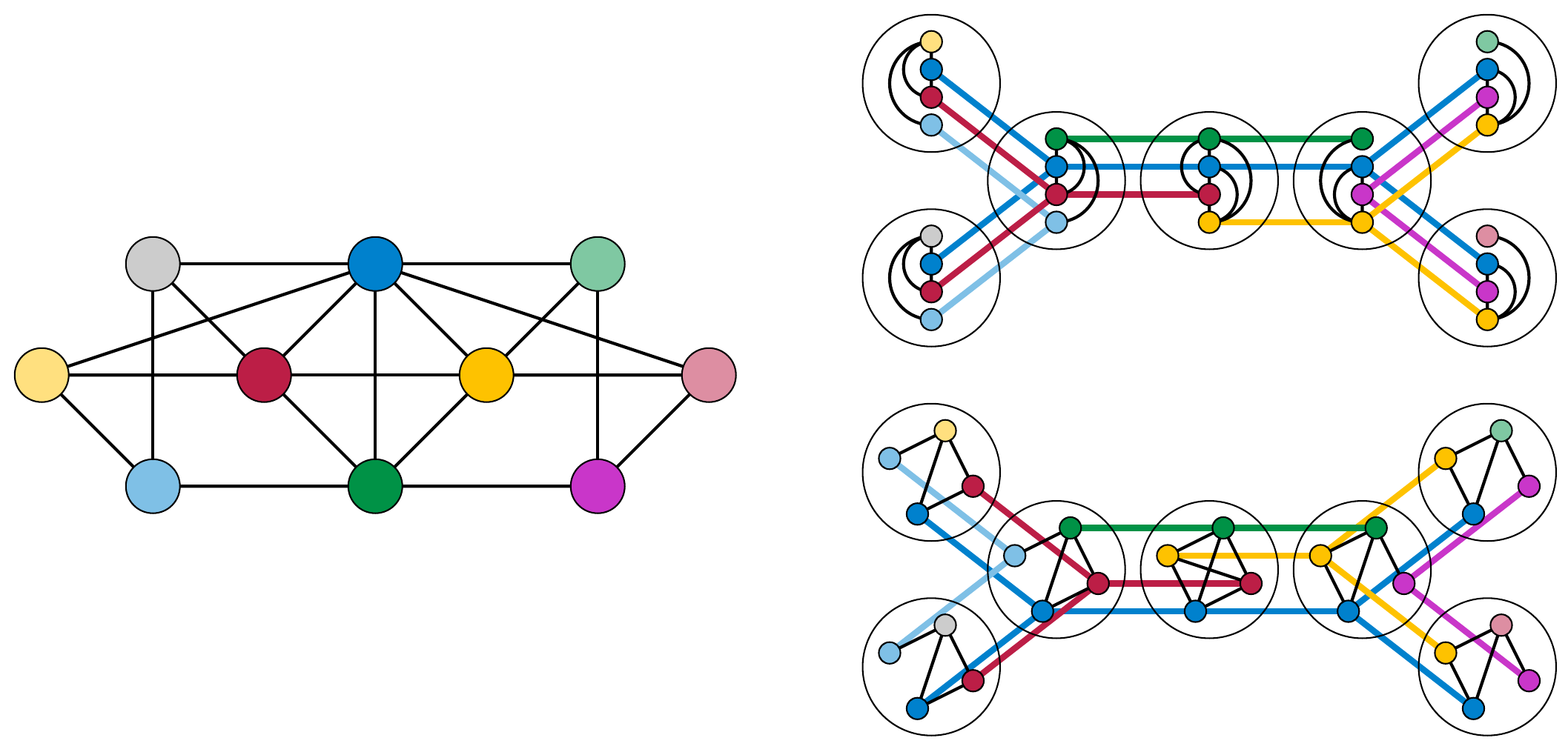}
	\caption{Witness drawings for a graph, $G$, with treewidth~3.
        A standard drawing of $G$ is shown in~\textbf{\textsf{(a)}}.  %
        We show in \textbf{\textsf{(b)}}
         a witness drawing of a tree decomposition, in which the subgraph for each
        bag is drawn as an arc diagram. The two lower witness drawings show
        the subgraph for each bag in a circular layout with straight-line edges. 
        In~\textbf{\textsf{(c)}}, the tracks connecting instances of the same vertex are drawn
        as straight segments, while %
        in~\textbf{\textsf{(d)}} the track edges are routed in orbits surrounding the bag vertices.
        The track edges are colored to match the vertices they represent.
        The drawing %
        in~\textbf{\textsf{(b)}} has 6 track-track crossings, no edge-edge
        crossings, and 6 track-edge crossings.
        The drawing %
        in~\textbf{\textsf{(c)}} has 4 track-track crossings, 1 edge-edge
        crossing, and 13 track-edge crossings. The drawing %
        in~\textbf{\textsf{(d)}} has 12 track-track crossings, 1 edge-edge crossing, and no track-edge crossings.}
	\label{fig:witness-3-decomp}
\end{figure}

\section{Preliminaries}

For an integer $p \geq 1$, we use $[p]$ as a shorthand for the set $\{1, 2, \ldots, p\}$.
Let $G = (V, E)$ be a graph with vertex set $V$ and edge set $E$.
All considered graphs are simple and undirected. %
Throughout this paper, we use $n \coloneqq \Size{V}$ and $m \coloneqq \Size{E}$ to refer to the number of vertices and edges of $G$, respectively.
For a tree (or path) decomposition $T$ of $G$ with bags $V_1, V_2, \ldots, V_k$, we let $\DecWidth[T]$, or simply $\DecWidth$ if $T$ is clear from the context, denote its width.
Recall $\DecWidth[T] \coloneqq \max_{i \in [k]} \Size{V_i} - 1$.
We use $\DecWidth* \coloneqq \DecWidth + 1$ as a shorthand for the maximum number of vertices in a bag. %
Without loss of generality, we root the tree $T$ at an arbitrary bag.
We use \emph{decomposition} as a synonym for path and tree decompositions.
Furthermore, we assume {%
that the decomposition $T$ consists of} $k = \BigO{n}$ {%
bags} and that each bag of $T$ has degree at most three. {%
It is well-known that every graph of treewidth \DecWidth admits a certifying tree decomposition with $k = \BigO{n}$ fulfilling this property~\cite{MT.cfi.1992,BK.ECA.1996}}.
In particular, {%
in} \emph{nice tree decompositions}, which have found attention in the design of parameterized (dynamic programming) algorithms~\cite{CFK+.PA.2015}, {%
every bag has degree at most three}.
With slight abuse of notation, we write $V_i \in T$ to indicate that $V_i$ is a bag of $T$.
Furthermore, for a bag $V_i \in T$, we let $G_i \coloneqq G[V_i]$ and $E_i \coloneqq E(G_i)$ denote the subgraph of $G$ induced by $V_i$ and its edge set, respectively.

\section{A Taxonomy of Width-Witness Drawing Styles}
\label{sec:taxonomy}
Before we describe how to compute a witness drawing for a decomposition $T$ of $G$, we examine the different witness drawing styles %
illustrated in \Cref{fig:witness-3-decomp}.
In all of them, %
each bag $V_i \in T$ is represented as a disk $D_i$ of uniform radius. %
We consider three different drawing styles that differ in how they arrange vertices inside each bag and how they draw the tracks of the support for each vertex.
We use parenthesized letters to refer to each style.

\subparagraph*{Linear Drawings~(\SymbolLinear).}
In a \emph{linear drawing}, as in \Cref{fig:witness-3-decomp}b, we arrange the vertices along a vertical line inside a disk and draw the edges as arcs to the left or right of the vertices.
This drawing style is inspired by one- and two-page book drawings~\cite{w-dvss-02, KMN.EEB.2018}.
The tracks for each support are straight-line edges connecting the respective vertices.
We use \SymbolLinear[1] and \SymbolLinear[2] to differentiate between one- and two-page book drawings where needed.
In an \SymbolLinear[1]-drawing, all arcs, must be either to the right or to the left of the vertices, but this can change between different disks.
In an \SymbolLinear[2]-drawing, the arcs can be on both sides  of the vertices.

\subparagraph*{Circular Drawings~(\SymbolCircular).}
\Cref{fig:witness-3-decomp}c shows a \emph{circular drawing} of $T$.
There, we arrange the vertices $V_i$ along a circle inside the disk $D_i$ and draw the edges $E_i$ as straight-line chords of said circle.
Combinatorially speaking, this is equivalent to a one-page book drawing of~$G_i$~\cite{BK.BTG.1979}.
As for linear drawings, we draw tracks as straight-line edges.

\subparagraph*{Orbital Drawings~(\SymbolOrbital).}
Our last drawing style, of \emph{orbital drawings}, is inspired by the edge routing in previous drawings with circular vertex placements~\cite{BNRW.TSM.2023,BNT+.BLC.2024}.
It is illustrated in \Cref{fig:witness-3-decomp}d and uses again a circular arrangement of the vertices.
However, we now subdivide the disk $D_i$ along the vertices $V_i$ into an inner and outer part: In the inside, we draw the edges $E_i$ as straight-line chords of the respective circle. 
The outside, however, the so-called \emph{track-routing area}, is reserved for tracks.
These are aligned as orbits and are not allowed to leave the track-routing area.
We allow them to be routed clockwise or counterclockwise around $D_i$. 
This drawing style has no track-edge crossings {%
and all track-track crossings occur inside the track-routing area.
While this can result in a cleaner visualization, forcing tracks to stay inside the track-routing area may clutter drawings of decompositions of larger width.}
\begin{figure}
	\centering
	\includegraphics[page=1]{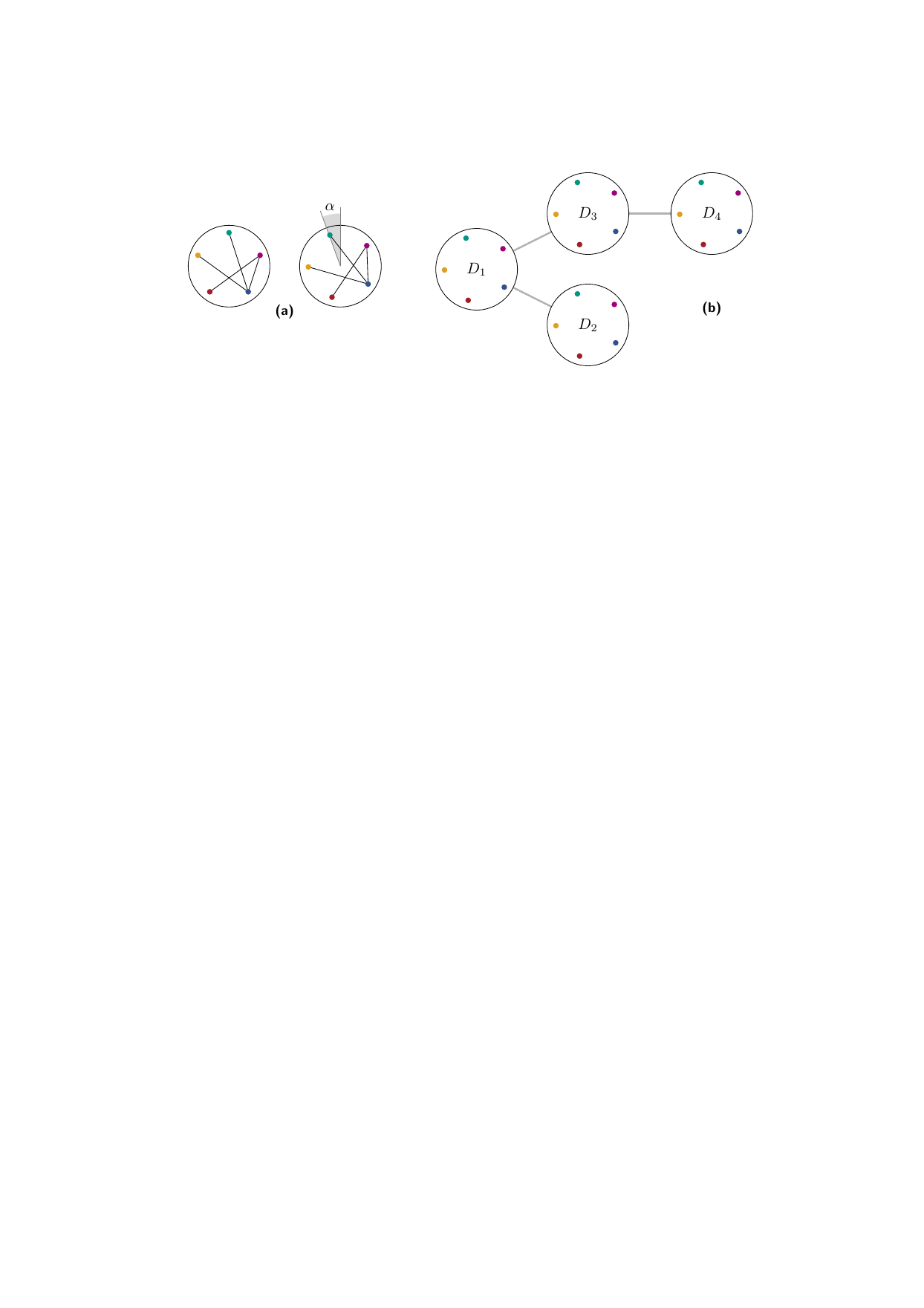}
	\caption{\textbf{\textsf{(a)}} The same circular arrangement of $V_i$ can result in different vertex placements in the disk $D_i$ depending on the angle $\alpha$: left: $\alpha = 0$, right: $\alpha = 20^{\circ}$. \textbf{\textsf{(b)}} Left-to-right drawing of the tree $T$ with four bags. We omit visualizing edges and tracks to maintain readability.}
	\label{fig:example-alpha}
\end{figure}

\medskip
In each drawing style, we evenly distribute the vertices within each disk. For linear drawings the vertex permutation uniquely defines each vertex's position inside the disk, but for the other two drawing styles this still leaves a free rotation angle.
To that end, we let $0 \leq \alpha < 2\pi/\DecWidth*$ denote the starting angle of the first vertex inside each disk, where $\alpha = 0$ corresponds to a placement at twelve o'clock (\Cref{fig:example-alpha}a).
We choose a single $\alpha$ across all disks and we select it so that no interior of a track crosses a vertex independent of the order of the vertices inside each disk.
For each bag of a tree decomposition, we evenly distribute its children, if there are any, to the right of its disk (\Cref{fig:example-alpha}b).

A witness drawing $\Drawing$ of a given decomposition $T$ of $G$ consists of a drawing of (i) the tree (or path) $T$, (ii) each of its bags in one of the above-described drawing styles, and (iii) the support for each vertex of $G$.
Depending on the drawing style, we call \Drawing an $\SymbolLinear$-, \SymbolCircular-, or \SymbolOrbital-drawing of $T$, or simply drawing, if the style is clear from the context.
{%
The drawing} \Drawing might contain crossings, and we differentiate between the following three crossing types:
If two tracks cross, we call it a track-track crossing (or simply \SymbolTTCrossing-crossing), if a track and an edge in a disk cross, it is a track-edge (\SymbolTECrossing) crossing, and if two edges of $G$ cross inside a disk, we call it an edge-edge~(\SymbolEECrossing) crossing; recall \Cref{fig:witness-3-decomp}.
If neither of these crossings exist, we call $\Drawing$ \emph{planar}.
Note that we do not count crossings with edges of $T$. %
{%
We treat \SymbolTTCrossing-, \SymbolTECrossing-, and \SymbolEECrossing-crossings equally throughout the remainder of the paper.
However, our algorithms can be readily adapted to weight different types of crossings differently.}

\section{Linear Drawings}
\label{sec:linear}
We start our study by considering linear drawings, which visualize each $G_i$ as a book drawing with one or two pages and draw the tracks as straight lines.
Observe that the drawing of $G_i$ is uniquely defined by the linear order $\prec_i$ of $V_i$ and the function $\sigma_i \colon E_i \to \{\ell, r\}$, also called \emph{page assignment}, that assigns each edge to the left or right page, respectively.
In this section, we present two algorithms that compute, for a given decomposition $T$ of constant width $\DecWidth$, a crossing-minimal \SymbolLinear-drawing in linear time.
\shortLong{Minimizing the crossings of an \SymbolLinear-drawing is tightly interwoven with minimizing the \SymbolEECrossing-crossings in the book drawing of each $G_i$, $i \in [k]$.
The known hardness of the latter problem~\cite{MKNF.Ncc.1987} allows us to show also hardness for the former problem, which justifies our focus on constant-width decompositions.}{However, we first justify with the following theorem our focus on constant-width decompositions.}

\begin{statelater}{figureHardness}
    \begin{figure}
 	\centering
 	\includegraphics[page=1]{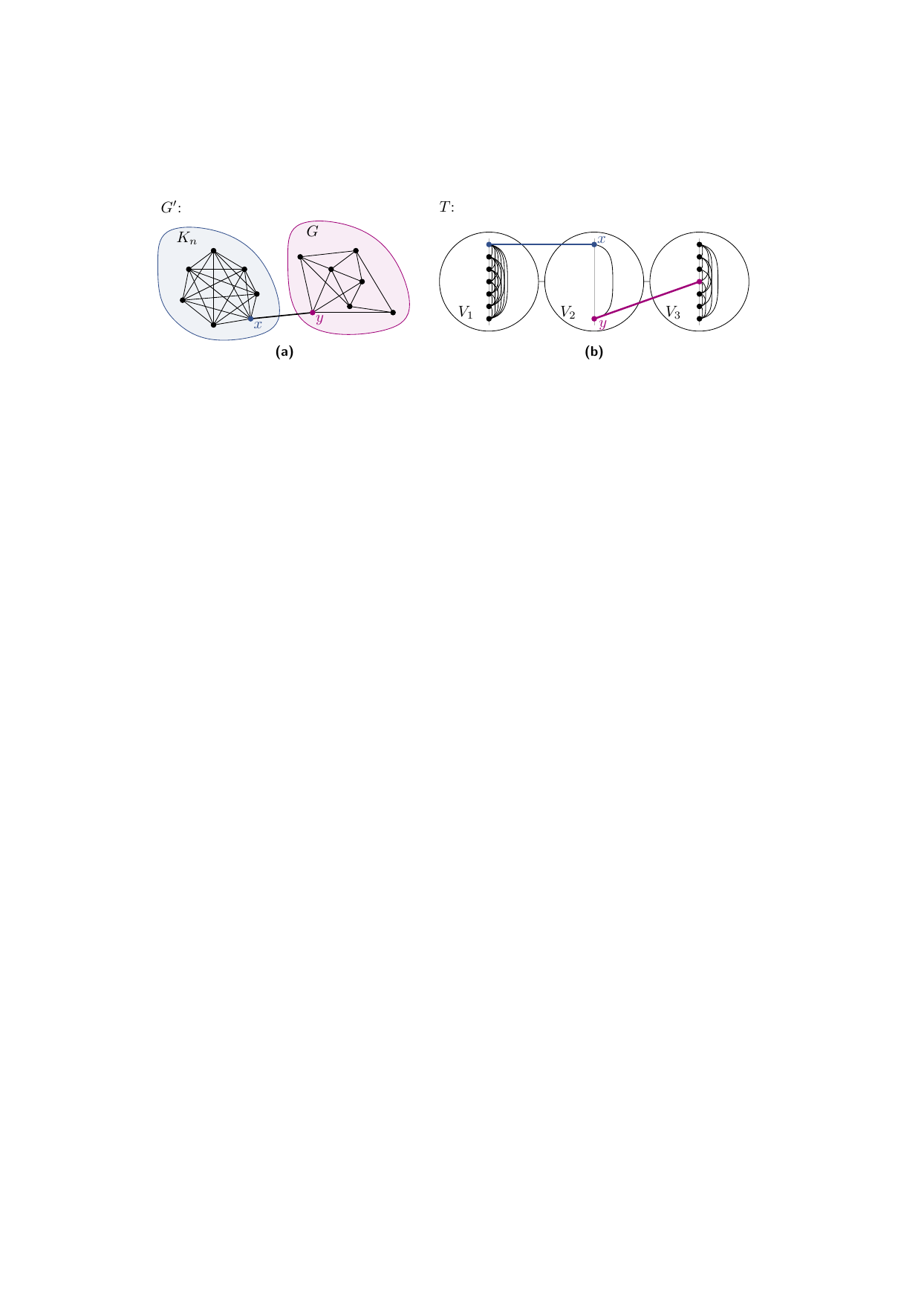}
 	\caption{\textbf{\textsf{(a)}} The graph $G'$ with $2n$ vertices and a \textbf{\textsf{(b)}} path decomposition $T$ of $G'$ of width $n - 1$.}
 	\label{fig:reduction}
\end{figure}
\end{statelater}
\begin{restatable}\restateref{thm:hardness}{theorem}{theoremHardness}
    \label{thm:hardness}
	Deciding if a path decomposition $T$ has an \SymbolLinear[1]-drawing with at most $c$ crossings is \NP-hard.
	The same holds true for \SymbolLinear[2]-, \SymbolCircular-, and \SymbolOrbital-drawings.
\end{restatable} 
\begin{prooflater}{ptheoremHardness}
	First, we show the statement for \SymbolLinear[1]-drawings and later argue that it also holds for the other drawing styles.
	We reduce from the \NP-complete problem \probname{One-Page Crossing Minimization}, which asks for a given graph $G$ whether it admits a one-page book drawing with at most $c$ crossings~\cite{MKNF.Ncc.1987}.
	
	Let $(G, c)$ be an instance of \probname{One-Page Crossing Minimization} with $G = (V, E)$ and $n = \Size{V}$.
	We construct the graph $G' = (V', E')$ by taking the disjoint union of the two graphs~$G$ and $K_n = (V^n, E^n)$, i.e., a complete graph on $n$ vertices.
    Observe $\Size{V'} = 2n$.
	Let $x$ be an arbitrary vertex of $K_n$ and $y$ an arbitrary vertex of $G$; we add to $G'$ the edge $(x,y)$.
	As $G'$ contains a clique on $n$ vertices, its treewidth (and thus pathwidth) is at least $n - 1$~\cite{CFK+.PA.2015}.
	The path decomposition $T$ with the bags $V_1 = V^n$, $V_2 = \{x,y\}$, and $V_3 = V$, connected in this order, shows that the treewidth (and pathwidth) is at most $n - 1$.
	In particular, $T$ is an optimal decomposition of $G'$ with respect to its width.
	It is known that a one-page book drawing of~$K_n$ has %
    $\binom{n}{4}$ crossings~\cite{Sch.GCN.2013}, and, due to symmetry, any linear order of $K_n$'s vertices induces this many crossings.
	We now use this fact to show that $(G, c)$ has a one-page book drawing with at most $c$ crossings if and only if $T$ has an \SymbolLinear[1]-drawing with at most $\binom{n}{4} + c$ crossings.
	
	For the forward direction, let $\prec_G$ be a linear order of $V$ that yields a one-page drawing of $G$ with at most $c$ crossings. %
	We construct a drawing \Drawing of $T$ by using an arbitrary order $\prec_1$ on $V_1$ that starts with $x$, i.e., $x \prec_1 v$ for every $v \in V^n$, $v \neq x$, and set $x \prec_2 y$ and $\prec_3\ =\ \prec_G$; see also \Cref{fig:reduction}.
	Observe that this drawing has no \SymbolTTCrossing- and \SymbolTECrossing-crossings.
	Thus, there are only \SymbolEECrossing-crossings: $\binom{n}{4}$ in the first, and $c$ in the third bag.
	For the backward direction, assume that $T$ admits an \SymbolLinear[1]-drawing \Drawing with at most $\binom{n}{4} + c$ crossings.
	Without loss of generality, we can assume that the bags $V_1$, $V_2$, and $V_3$ are drawn in this order from left to right.
	The drawing of $K_n$ in the first bag has $\binom{n}{4}$-many \SymbolEECrossing-crossings.
	Thus, any drawing of $G$ in the third bag can have at most $c$ \SymbolEECrossing-crossings.
    We can use its linear order $\prec_3$ as a witness for $(G, c)$.
	
	The above arguments establish \NP-hardness for \SymbolLinear[1]-drawings.
	One-page book drawings are equivalent to straight-line drawings where the vertices are placed in convex position~\cite{Epp.STG.2002,BK.BTG.1979}.
	Furthermore, between each pair of adjacent bags there is only one track edge.
    Therefore, the same construction and reasoning yields also \NP-hardness for \SymbolCircular- and \SymbolOrbital-drawings independent of the value for $\alpha$.
	For \SymbolLinear[2]-drawings, we can use the known \NP-hardness for determining whether a graph has a two-page book embedding, i.e., a crossing-free two-page book drawing.
	Using the facts that the two-page crossing number of $K_n$ is known~\cite{Sch.GCN.2013} and that book drawings are agnostic to cyclic rotations of the linear order $\prec_G$, we can use the same construction to obtain hardness for \SymbolLinear[2]-drawings.	
\end{prooflater}
Since a path decomposition is a restricted form of a tree decomposition, \Cref{thm:hardness} also extends to the latter type of decompositions.

\subsection{Linear Drawings for Path Decompositions}
\label{sec:linear-path}
We now focus on the more restrictive setting, where we want to compute a witness drawing of a path decomposition $T$ of $G$ of width $\DecWidth$.
Throughout this section, we let the bags of $T$ be ordered as they appear on the path $T$ and assume $V_1$ to be the leftmost bag.

All of our algorithms, including those for \SymbolCircular- and \SymbolOrbital-drawings, are based on the insight that, for each bag $V_i \in T$, the number of crossings involving edges and/or tracks incident to $V_i$ is determined by the drawings of $G_i$ in the disk $D_i$ and the drawings of the induced subgraphs of neighboring bags.
This enables us to employ a dynamic programming (DP) algorithm which processes the path decomposition from left to right and computes a crossing-minimal drawing for the first $i$ bags, for every $i \in [k]$.
Since the framework of our algorithms is identical across all drawing styles (and decomposition types), we first describe its general structure and later discuss how it can be adapted to specific drawing styles and tree decompositions.

\subparagraph*{A General DP Algorithm.}
We let $k$ denote the number of bags in $T$ and recall $\DecWidth* = \DecWidth + 1$.
Let $\Drawing_i$ be a drawing of the graph $G_i$ in the disk $D_i$ for each $i \in [k]$.
Let $\mathcal{G}_i$ denote the set of all possible drawings $\Gamma_i$ of $G_i$.
All drawing styles admit a combinatorial representation of $\Drawing_i$, which implies that $\mathcal{G}_i$ is finite for each $i \in [k]$.
Let $\Delta(\Drawing_i)$ denote a drawing of the tracks inside the disk~$D_i$, given a drawing~$\Gamma_i$ of $G_i$, and let $\mathcal{D}(\Gamma_i)$ be the set of all such drawings.
\SymbolLinear- and \SymbolCircular-drawings represent the tracks as straight lines.
Therefore, $\Delta(\Drawing_i)$ is only relevant for \SymbolOrbital-drawings and we leave it undefined for the other two drawing styles. %
For this drawing style, we can again represent $\Delta(\Drawing_i)$ combinatorially, ensuring that $\mathcal{D}(\Gamma_i)$ is finite.
We define $\mathcal{G}_{\max} \coloneqq \max_{i \in [k]} \Size{\mathcal{G}_i}$ and $\mathcal{D}_{\max} \coloneqq \max_{\Drawing_i \in \mathcal{G}_i, i \in[k]} \Size{\mathcal{D}(\Gamma_i)}$ as the maximum number of drawings of $G_1, \ldots, G_k$ and their corresponding track drawings, respectively.

In our algorithm, we maintain an $\BigO{k \cdot \mathcal{G}_{\max} \cdot \mathcal{D}_{\max}}$-size table~$C$, where $C[i, \Drawing_i, \Delta(\Drawing_i)]$ %
stores the minimum number %
of crossings of a drawing for the bags $V_1, V_2, \ldots, V_i$, where $\Drawing_i$ is the drawing of $G_i$ in $D_i$ and $\Delta(\Drawing_i)$ the drawing of its incident tracks.
We call $S_i = (i, \Drawing_i, \Delta(\Drawing_i))$ a \emph{state} of our DP.
To compute $C[S_i]$, we need to count all \SymbolEECrossing-crossings in $\Drawing_i$ involving edges from $E_i$, and all \SymbolTECrossing-, and \SymbolTTCrossing-crossings involving also edges or tracks incident to vertices from $V_{i - 1}$ for a given state $S_{i - 1}$, if the bag $V_{i - 1}$ exists.
Let $\CrossingCountEE{\cdot}$, $\CrossingCountTE{\cdot,\cdot}$, and $\CrossingCountTT{\cdot,\cdot}$ denote these number of crossings, respectively.
Note that the exact definition of $\CrossingCountEE{\cdot}$, $\CrossingCountTE{\cdot,\cdot}$, and $\CrossingCountTT{\cdot,\cdot}$ depends on the desired drawing style.
For $i > 1$, the number of crossings $\CrossingCount{S_i, S_{i - 1}}$ for two states $S_i = (i, \Drawing_i, \Delta(\Drawing_i))$ and $S_{i - 1} = (i - 1, \Drawing_{i - 1}, \Delta(\Drawing_{i - 1}))$ equals the sum of $\CrossingCountEE{S_i}$, $\CrossingCountTE{S_{i}, S_{i - 1}}$, and $\CrossingCountTT{S_{i}, S_{i - 1}}$.
We can now relate the different states in $C$ to each other as follows, where $S_i = (i, \Drawing_i, \Delta(\Drawing_i))$, $i \in [k]$, $\Drawing_i \in \mathcal{G}_i$, and $\Delta(\Drawing_i) \in \mathcal{D}(\Drawing_i)$: 
\begin{align}
	C[S_i] =
	\begin{cases}
		\CrossingCountEE{S_i} &\text{if}\ i = 1\\
		\min \limits_{\substack{S_{i - 1}\ =\ (i - 1, \Drawing_{i - 1}, \Delta(\Drawing_{i - 1})),\\\Drawing_{i-1} \in \mathcal{G}_{i-1},\ \Delta(\Drawing_{i-1}) \in \mathcal{D}(\Drawing_{i-1})}} C[S_{i - 1}] + \CrossingCount{S_i, S_{i - 1}} &\text{otherwise}
	\end{cases}\label{eq:pathwidth-general-dp-recurrence}
\end{align}
Using induction, we can show that \Cref{eq:pathwidth-general-dp-recurrence} captures the minimum number of crossings.
\begin{restatable}\restateref{lem:pathwidth-general-dp-correctness}{lemma}{lemmaCorrectness}
	\label{lem:pathwidth-general-dp-correctness}
	Let $T$ be a path decomposition of $G$ with $k$ bags.
	For each $i \in [k]$, drawings $\Drawing_i \in \mathcal{G}_i$ and $\Delta(\Drawing_i) \in \mathcal{D}(\Drawing_i)$, the table entry $C[i, \Drawing_i, \Delta(\Drawing_i)]$ equals the minimum number of crossings of a witness drawing for the bags $V_1, \ldots, V_i$ with the drawings $\Drawing_i$ and $\Delta(\Drawing_i)$ for $G_i$ and its incident tracks, respectively. 
\end{restatable}
\begin{prooflater}{plemmaCorrectness}
    We use induction over $i$ to show correctness.
    Throughout the proof, we let $\Drawing_i$ and $\Delta(\Drawing_i)$ be two arbitrary drawings.
	
    \proofsubparagraph*{Base Case ($\boldsymbol{i = 1}$).}
    For $i = 1$, we are seeking a drawing of the bag $V_1$ only, i.e., a drawing of $G_1$ in the disk $D_1$.
    Clearly, the minimum number of crossings equals the number of \SymbolEECrossing-crossings in the disk $D_1$, which is precisely $\CrossingCountEE{1, \Drawing_1, \Delta(\Drawing_1)}$; see also the first case in \Cref{eq:pathwidth-general-dp-recurrence}.
    Thus, $C[1, \Drawing_1, \Delta(\Drawing_1)]$ correctly captures the required crossing count.
	
    \proofsubparagraph*{Inductive Step.}
    Let $i > 1$ and note that we are in the second case of \Cref{eq:pathwidth-general-dp-recurrence}.
    Furthermore, let $C[i, \Drawing_i, \Delta(\Drawing_i)] = c$ and assume, for the sake of a contradiction, that there exists a crossing-minimal witness drawing $\Drawing^*$ for the first $i$ bags and with the drawings $\Drawing_i$ and $\Delta(\Drawing_i)$ for the disk $D_i$.
    Let the number of crossings for $\Drawing^*$ be some $c^* < c$.
    We now consider the drawing $\Drawing'$ on the first $i - 1$ bags that is induced by $\Drawing^*$ and denote by $c'$ its number of crossings.
    Let $\Drawing_{i - 1}$ be the drawing of $G_{i - 1}$ and $\Delta(\Drawing_{i - 1})$ be the drawing of its incident tracks in $\Drawing'$.
    Clearly, $c'$ is the minimum number of crossings of a witness drawing for the first $i - 1$ bags of $T$ that uses the drawings $\Drawing_{i - 1}$ and $\Delta(\Drawing_{i - 1})$ for $G_{i - 1}$ and its incident tracks, respectively.
    If this would not be the case, then $\Drawing^*$ would not be crossing-minimal.
    By our inductive hypothesis, we have $C[i - 1, \Drawing_{i - 1}, \Delta(\Drawing_{i - 1})] = c'$.
    Note that we consider in the second case of \Cref{eq:pathwidth-general-dp-recurrence} all possible states for $i - 1$, and in particular the state $(i - 1, \Drawing_{i - 1}, \Delta(\Drawing_{i - 1}))$.
    Since we take among all such states the one that leads to a minimum number of crossings, the fact that we have $C[i, \Drawing_i, \Delta(\Drawing_i)] = c > c^*$ yields a contradiction to the existence of $\Drawing^*$ with $c^*$ crossings, as we would have set $C[i, \Drawing_i, \Delta(\Drawing_i)]$ to its number of crossings $c^*$ otherwise.
    Hence, by induction, the lemma statement follows. 
\end{prooflater}
With \Cref{lem:pathwidth-general-dp-correctness} at hand, we now have all key ingredients for our DP.
It remains to define for each drawing style the representations of $\Drawing_i$ and $\Delta(\Drawing_i)$, if needed, and to implement the crossing counting functions $\CrossingCountEE{\cdot}$, $\CrossingCountTE{\cdot,\cdot}$, and $\CrossingCountTT{\cdot,\cdot}$.
This will be the main task for the remainder of this and the upcoming sections.

\subparagraph*{Two-Page Linear Drawings.}
We now aim to compute a crossing-minimal \SymbolLinear[2]-drawing of a path decomposition $T$.
Recall that a two-page book drawing $\Drawing_i$ of $G_i = (V_i, E_i)$ is uniquely defined by the linear order $\prec_i$ on $V_i$ and the page assignment $\sigma_i \colon E_i \to \{\ell, r\}$.
Thus, we set $\Drawing_i = \langle \prec_i, \sigma_i\rangle$.
As the tracks are drawn using straight lines between the vertices in the respective disks, it is sufficient to store only the drawing for each $G_i$, $i \in [k]$, in our DP table~$C$.
As $\mathcal{G}_{\max} = \BigO{\DecWidth*! \cdot 2^{\DecWidth*^2}}$ holds, %
the size of the table $C$ is in $\BigO{k \cdot \DecWidth*! \cdot 2^{\DecWidth*^2}}$ %
and it only remains to implement the crossing counting functions.

For \SymbolEECrossing-crossings, we observe that $\CrossingCountEE{\cdot}$ equals the number of edges with alternating endpoints in $\prec_i$ but assigned in $\sigma_i$ to the same page; see \Cref{fig:crossing-counts}a and \Cref{eq:crossing-counts-ee}.
\begin{align}
    \CrossingCountEE{S_i} &\coloneqq \Size{\{(u,v), (a,b) \in E_i \mid \sigma((u,v)) = \sigma((a,b)), u \prec_i a \prec_i v \prec_i b\}}\label{eq:crossing-counts-ee}
\end{align}
To determine the number of \SymbolTECrossing-crossings, it is sufficient to observe that an edge $(u,v) \in E_{i - 1}$ with $u \prec_{i - 1} v$ crosses with a track for the vertex $w \in V_{i - 1} \cap V_i$ if and only if $u \prec_{i-1} w \prec_{i-1} v$ and $\sigma_{i-1}((u,v)) = r$ as visualized in \Cref{fig:crossing-counts}b.
A symmetric observation can be made for edges $(u,v) \in E_i$ with $\sigma_{i}((u,v)) = \ell$, yielding the following function:
\begin{align}
	\CrossingCountTE{S_i, S_{i - 1}} &\coloneqq \sum_{\substack{(u,v) \in E_{i-1}\\\sigma_{i-1}((u,v)) = r}}\Size{\{ w \in V_{i - 1} \cap V_i \mid u \prec_{i-1} w \prec_{i-1} v\}}\label{eq:crossing-counts-te}\\
	&+ \sum_{\substack{(u,v) \in E_i\\\sigma_{i}((u,v)) = \ell}}\Size{\{ w \in V_{i - 1} \cap V_i \mid u \prec_{i} w \prec_{i} v\}}\notag
\end{align}
This leaves us with counting the \SymbolTTCrossing-crossings, for which it is sufficient to count the number of inversions between $\prec_{i-1}$ and $\prec_{i}$; see \Cref{fig:crossing-counts}c and \Cref{eq:crossing-counts-tt}.
\begin{align}
\CrossingCountTT{S_i, S_{i-1}} &\coloneqq \Size{\{u,v \in V_{i - 1} \cap V_i \mid u \prec_{i-1} v, v \prec_{i} u \}}\label{eq:crossing-counts-tt}
\end{align}
Combining this with the recurrence relation from \Cref{eq:pathwidth-general-dp-recurrence}, we obtain the following:

\begin{figure}
	\centering
	\includegraphics[page=1]{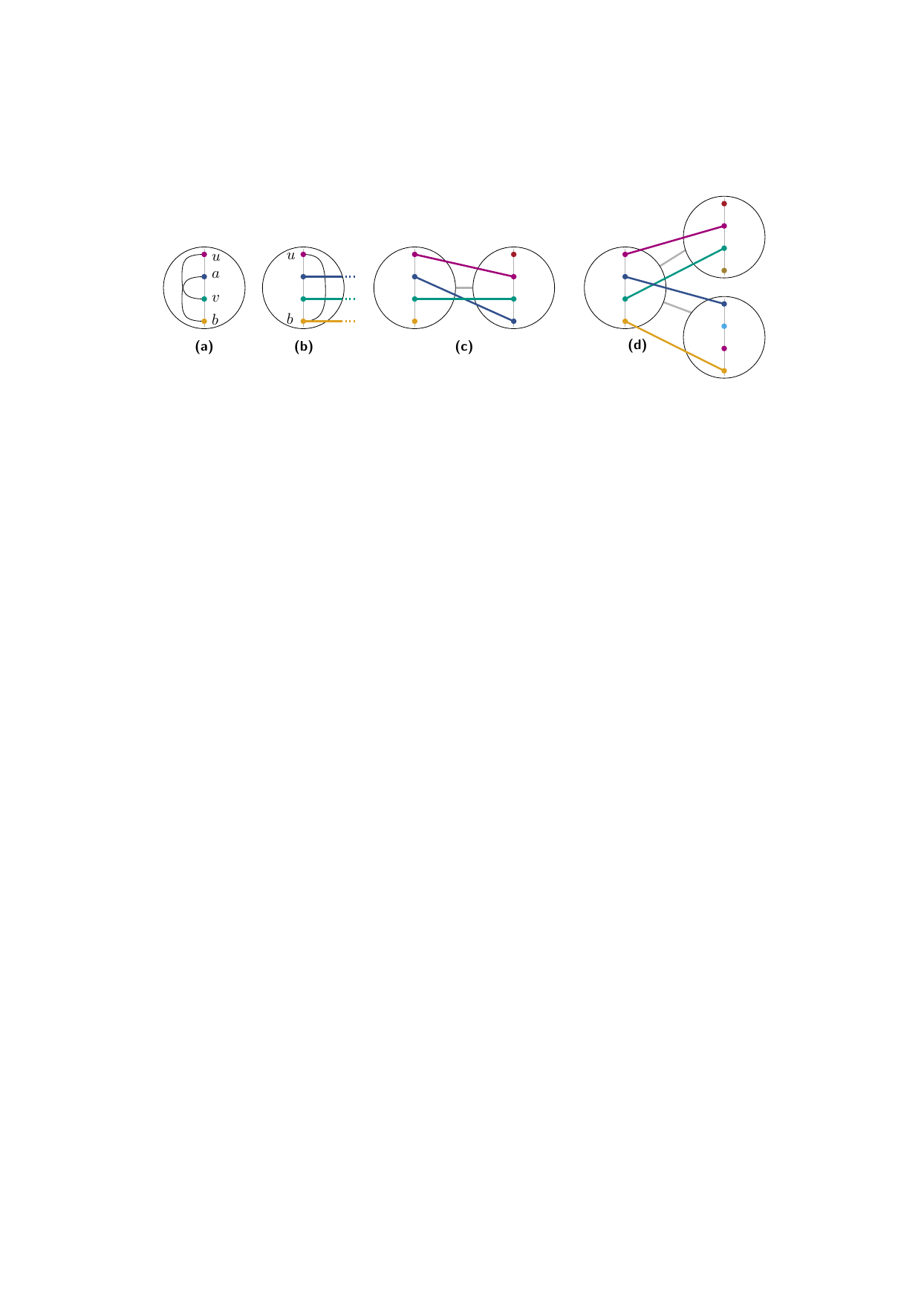}
	\caption{Illustration of different types of crossing: \textbf{\textsf{(a)}} The \SymbolEECrossing-crossing between the two edges $(u,v)$ and $(a, b)$ is enforced by the relative order of their endpoints. \textbf{\textsf{(b)}} The edge $(u,b)$ crosses the blue and green tracks because the respective vertices lie between $u$ and $b$. \textbf{\textsf{(c)}} The blue and green tracks cause a \SymbolTTCrossing-crossing because the relative order among the respective endpoints is inversed. \textbf{\textsf{(d)}} A ``criss-cross'' \SymbolTTCrossing-crossing between blue and green that can occur in tree decompositions.}
	\label{fig:crossing-counts}
\end{figure}

\begin{theorem}
	\label{thm:linear-pathwidth}
	Let $T$ be a path decomposition of $G$ of width $\DecWidth$.
	We can compute a crossing-minimal $\SymbolLinear[2]$-drawing of $T$ in $\BigO{n \cdot (\DecWidth*!)^2 \cdot 4^{\DecWidth*^2} \cdot \DecWidth*^4}$ time.
\end{theorem}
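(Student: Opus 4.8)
The plan is to realize the recurrence of \Cref{eq:pathwidth-general-dp-recurrence} as a table-filling algorithm and then bound its running time; its correctness is already guaranteed by \Cref{lem:pathwidth-general-dp-correctness}. Since the tracks of an $\SymbolLinear[2]$-drawing are drawn as straight lines, the track drawing $\Delta(\Drawing_i)$ is irrelevant and a state collapses to $S_i = (i, \Drawing_i)$ with $\Drawing_i = \langle \prec_i, \sigma_i\rangle$. I would fill $C$ in order of increasing $i$, using the base case $C[S_1] = \CrossingCountEE{S_1}$ and, for $i > 1$, the minimization in \Cref{eq:pathwidth-general-dp-recurrence}. By \Cref{lem:pathwidth-general-dp-correctness} the value $\min_{S_k} C[S_k]$ equals the minimum number of crossings of an $\SymbolLinear[2]$-drawing of $T$, and a crossing-minimal drawing is recovered by standard backtracking through the minimizing choices.

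For the running time I would first count states. A bag has at most $\DecWidth*$ vertices and hence at most $\DecWidth*!$ linear orders $\prec_i$; since $\Size{E_i} \leq \binom{\DecWidth*}{2} \leq \DecWidth*^2$, there are at most $2^{\DecWidth*^2}$ page assignments $\sigma_i$. Thus $\mathcal{G}_{\max} = \BigO{\DecWidth*! \cdot 2^{\DecWidth*^2}}$, and the table $C$ has $\BigO{k \cdot \DecWidth*! \cdot 2^{\DecWidth*^2}} = \BigO{n \cdot \DecWidth*! \cdot 2^{\DecWidth*^2}}$ entries, using $k = \BigO{n}$.

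It then remains to bound the work per entry. For $i>1$, evaluating $C[S_i]$ minimizes over all $\BigO{\DecWidth*! \cdot 2^{\DecWidth*^2}}$ candidate states $S_{i-1}$, and each such pair requires computing $\CrossingCount{S_i, S_{i-1}} = \CrossingCountEE{S_i} + \CrossingCountTE{S_i, S_{i - 1}} + \CrossingCountTT{S_i, S_{i-1}}$. Reading off the explicit formulas, \Cref{eq:crossing-counts-ee} ranges over the $\BigO{\DecWidth*^4}$ pairs of edges of $E_i$; \Cref{eq:crossing-counts-te} ranges over $\BigO{\DecWidth*^2}$ edges and, per edge, over $\BigO{\DecWidth*}$ shared vertices, costing $\BigO{\DecWidth*^3}$; and \Cref{eq:crossing-counts-tt} counts inversions among at most $\DecWidth*$ shared vertices in $\BigO{\DecWidth*^2}$ time. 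The $\CrossingCountEE{\cdot}$ term dominates, so a single transition costs $\BigO{\DecWidth*^4}$. Multiplying the $\BigO{n \cdot \DecWidth*! \cdot 2^{\DecWidth*^2}}$ entries by the $\BigO{\DecWidth*! \cdot 2^{\DecWidth*^2}}$ transitions each and the $\BigO{\DecWidth*^4}$ cost per transition yields the claimed bound $\BigO{n \cdot (\DecWidth*!)^2 \cdot 4^{\DecWidth*^2} \cdot \DecWidth*^4}$.

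The argument is dominated by bookkeeping rather than by any single hard step. The point that most needs care is the transition itself: one must check that every pair of adjacent states is admissible—no consistency is required between $\prec_{i-1}$ and $\prec_i$, since disagreements merely surface as $\SymbolTTCrossing$-crossings counted by \Cref{eq:crossing-counts-tt}—and that $\CrossingCountEE{\cdot}$, $\CrossingCountTE{\cdot,\cdot}$, and $\CrossingCountTT{\cdot,\cdot}$ together account for exactly the crossings charged to the interface of $V_{i-1}$ and $V_i$, as was established when those formulas were derived. I would also note that $\CrossingCountEE{S_i}$ depends only on $S_i$ and can be precomputed once per state, but this optimization does not affect the stated bound.
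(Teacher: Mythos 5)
Your proposal is correct and follows essentially the same route as the paper's proof: the same DP table of size $\BigO{k \cdot \DecWidth*! \cdot 2^{\DecWidth*^2}}$, the same $\BigO{\DecWidth*! \cdot 2^{\DecWidth*^2}}$ transitions per state at $\BigO{\DecWidth*^4}$ cost each (dominated by \Cref{eq:crossing-counts-ee}), correctness via \Cref{lem:pathwidth-general-dp-correctness}, and backtracking to recover the drawing. Your explicit per-equation cost breakdown and the remark that no consistency between $\prec_{i-1}$ and $\prec_i$ is required are merely more detailed versions of steps the paper states tersely.
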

\begin{proof}
	We employ our DP algorithm storing the above-described states to compute a crossing-minimal $\SymbolLinear[2]$-drawing of $T$.
	Recall that the table $C$ has size $\BigO{k \cdot \DecWidth*! \cdot 2^{\DecWidth*^2}}$, where $k = \BigO{n}$ denotes the number of bags in $T$.
	To compute a single state for some $i \in [k]$, we have to access up to $\BigO{\DecWidth*! \cdot 2^{\DecWidth*^2}}$ states for $i - 1$.
	A closer analysis of \Cref{eq:crossing-counts-ee,eq:crossing-counts-te,eq:crossing-counts-tt} reveals that we can evaluate the function $\CrossingCount{\cdot,\cdot}$ in $\BigO{\DecWidth*^4}$ time.
	Thus, we can compute the minimum number of crossings for a given state in $\BigO{\DecWidth*! \cdot 2^{\DecWidth*^2} \cdot \DecWidth*^4}$ time.
	Overall, this amounts to $\BigO{k \cdot (\DecWidth*!)^2 \cdot 4^{\DecWidth*^2} \cdot \DecWidth*^4}$ time to fill the entire table $C$.
	The minimum number of crossings can be obtained by taking the minimum over all $C[k, \cdot, \cdot]$.
	We can use standard backtracking techniques to compute the crossing-minimal $\SymbolLinear[2]$-drawing of $T$.
	The correctness of our algorithm follows directly from the correctness of our recurrence relation, i.e., \Cref{eq:pathwidth-general-dp-recurrence}, established in \Cref{lem:pathwidth-general-dp-correctness}.
\end{proof}
{%
With the use of weights, we can prioritize some types of crossings over others. Furthermore, by} restricting the DP table $C$ and relation from \Cref{eq:pathwidth-general-dp-recurrence}, we can generalize the above-presented DP.
For example, to compute an \SymbolLinear[1]-drawing, we can remove the page assignment from our definition of a drawing $\Drawing_i$ of $G_i$.
Furthermore, %
it could be desired to enforce a consistent linear order across different bags, i.e., forbid \SymbolTTCrossing-crossings.
To that end, we can consider in \Cref{eq:pathwidth-general-dp-recurrence} only those linear orders $\prec'$ that are consistent with $\prec$, i.e., where $\prec\mid_{V_{i - 1}}\ =\ \prec'\mid_{V_i}$.
Below, we summarize the effects on the running time.
\begin{corollary}
	\label{cor:linear-pathwidth}
	Let $T$ be a path decomposition of $G$ of width $\DecWidth$.
	We can compute a crossing-minimal
	\begin{itemize}
		\item $\SymbolLinear[1]$-drawing of $T$ in $\BigO{n \cdot (\DecWidth*!)^2 \cdot \DecWidth*^4}$ time.
		\item $\SymbolLinear[2]$-drawing of $T$ without \SymbolTTCrossing-crossings in $\BigO{n \cdot (\DecWidth*!)^2 \cdot 4^{\DecWidth*^2} \cdot \DecWidth*^4}$ time.
	\end{itemize}
\end{corollary}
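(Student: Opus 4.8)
The plan is to observe that both running-time bounds follow directly from the proof of \Cref{thm:linear-pathwidth} by specializing the state space in the DP and re-tracing the running-time analysis, without re-establishing correctness from scratch. Correctness in both cases is inherited from \Cref{lem:pathwidth-general-dp-correctness}: restricting the set of admissible states in \Cref{eq:pathwidth-general-dp-recurrence} only removes drawings from consideration, so the recurrence still computes the minimum over the (restricted) set of witness drawings, and the induction in the proof of \Cref{lem:pathwidth-general-dp-correctness} goes through verbatim with $\mathcal{G}_i$ replaced by the appropriate subfamily.

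For the \SymbolLinear[1]-drawing bound, I would drop the page assignment $\sigma_i$ from the state, so that a drawing $\Drawing_i$ of $G_i$ is represented solely by the linear order $\prec_i$ on $V_i$. This reduces $\mathcal{G}_{\max}$ from $\BigO{\DecWidth*! \cdot 2^{\DecWidth*^2}}$ to $\BigO{\DecWidth*!}$, since there are at most $\DecWidth*!$ orders of a bag and the one-page book drawing of $G_i$ is then uniquely determined (with the single free side choice per disk absorbed into the counting functions). The table $C$ thus has size $\BigO{k \cdot \DecWidth*!}$, computing each state accesses $\BigO{\DecWidth*!}$ predecessor states, and the counting functions $\CrossingCount{\cdot,\cdot}$ still evaluate in $\BigO{\DecWidth*^4}$ time as in the theorem. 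Multiplying out with $k = \BigO{n}$ gives $\BigO{n \cdot (\DecWidth*!)^2 \cdot \DecWidth*^4}$, matching the claimed bound; the $2^{\DecWidth*^2}$ factor simply disappears because no page assignment is stored.

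For the \SymbolLinear[2]-drawing without \SymbolTTCrossing-crossings, I would keep the full two-page state $\Drawing_i = \langle \prec_i, \sigma_i\rangle$, but in the minimization of \Cref{eq:pathwidth-general-dp-recurrence} I would restrict the predecessor states $S_{i-1}$ to those whose order $\prec_{i-1}$ is \emph{consistent} with $\prec_i$ on the shared vertices, i.e. $\prec_{i-1}\mid_{V_{i-1} \cap V_i}\ =\ \prec_i\mid_{V_{i-1} \cap V_i}$, exactly the condition described just before the corollary. By \Cref{eq:crossing-counts-tt}, this consistency is equivalent to having zero inversions between $\prec_{i-1}$ and $\prec_i$, hence zero \SymbolTTCrossing-crossings, so the restriction enforces precisely the desired property. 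Since the state space is unchanged, the table size remains $\BigO{k \cdot \DecWidth*! \cdot 2^{\DecWidth*^2}}$ and the asymptotic running time is identical to that of \Cref{thm:linear-pathwidth}, namely $\BigO{n \cdot (\DecWidth*!)^2 \cdot 4^{\DecWidth*^2} \cdot \DecWidth*^4}$; the consistency check can be folded into the $\CrossingCount{\cdot,\cdot}$ evaluation within the same $\BigO{\DecWidth*^4}$ budget (indeed a nonzero inversion count can simply be treated as infinite cost).

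The only point requiring mild care, rather than any real obstacle, is arguing that these restrictions do not break the inductive correctness argument: I would note that restricting the family of candidate drawings can only be safe if the restriction is \emph{compatible across the induction}, i.e. a globally admissible witness drawing induces admissible sub-drawings on its prefixes. This holds in both cases: a global one-page drawing induces one-page drawings on all prefix bags, and a globally \SymbolTTCrossing-free drawing induces a \SymbolTTCrossing-free drawing on every prefix (and pairwise consistency of orders is transitive along the path). With this observation the proof of \Cref{lem:pathwidth-general-dp-correctness} applies unchanged to the restricted recurrence, and the running-time bounds follow by the arithmetic above.
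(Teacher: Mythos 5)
Your proposal is correct and follows essentially the same route as the paper: the paper likewise obtains the \SymbolLinear[1]-bound by dropping the page assignment from the state (shrinking $\mathcal{G}_{\max}$ to $\BigO{\DecWidth*!}$) and the \SymbolTTCrossing-free \SymbolLinear[2]-bound by restricting the minimization in \Cref{eq:pathwidth-general-dp-recurrence} to predecessor states whose order is consistent on shared vertices, inheriting correctness from \Cref{lem:pathwidth-general-dp-correctness} and re-tracing the running-time analysis of \Cref{thm:linear-pathwidth}. Your additional remark that the restrictions are compatible across prefixes (so the induction survives) is a point the paper leaves implicit, and your handling of the per-disk side choice in \SymbolLinear[1]-drawings is sound as long as that single bit is kept in the state rather than resolved locally, which does not affect the claimed bounds.
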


\subsection{Linear Drawings for Tree Decompositions}
\label{sec:linear-tree}

Similarly, for graphs of bounded treewidth, we can obtain the minimum number of total crossings for $\SymbolLinear$-drawings via DP. Given a tree decomposition $T$ of $G$, we orient $T$ ``left-to-right'', %
where the root bag is leftmost and children are to the right of the parent. We then obtain a right-to-left ordering of the tree's bags, %
$V_k, V_{k-1}, \dotsc, V_{1}$, %
where $V_1$ is the root bag.

\subparagraph*{DP for Treewidth.}
Our DP formulation in this problem uses the same table as for path decompositions, but a given bag may now have more than one previous bag that can affect its crossings. We assume that the tree decomposition $T$ only has bags of degree up to three. As such, any bag has at most one parent (due to the ``left-to-right'' orientation) and at most two children. We consider three cases to calculate $C[S_i]$ for a bag $V_i$, based on its in-degree $\deg^-(V_i) \in \{0, 1, 2\}$. The cases where $\deg^-(V_i)$ equal zero or one are similar to the two cases presented in the DP for pathwidth, but when  $\deg^-(V_i) = 2$ the crossing function $\CrossingCount{\cdot, \cdot, \cdot}$ depends on the current state and its two child states. In this case, the current bag has two children belonging to its in-degree neighboring set $N^-(V_i)$ whose embedding (which child is above the other) is decided by the DP. Our DP table has the following recurrence relation:
\begin{align}
	C[S_i] =
	\begin{cases}
		\CrossingCountEE{S_i} &\text{if} \deg^-(V_i) = 0\\
		\min \limits_{\substack{S_{x}\ =\ (x, \Drawing_{x}, \Delta(\Drawing_{x}))\\ V_x \in N^-(V_i)}} C[S_{x}] + \CrossingCount{S_i, S_{x}} &\text{if} \deg^-(V_i) = 1 \\
        \min \limits_{\substack{(S_x, S_y) \\\ V_x,V_y \in N^-(V_i), x\neq y}} C[S_{x}] + C[S_{y}] + \CrossingCount{S_i, S_x, S_y} &\text{if} \deg^-(V_i) = 2
	\end{cases}\label{eq:treewidth-general-dp-recurrence}
\end{align} 

\subparagraph*{Two-Page Linear Drawings.} 
The number of $\SymbolEECrossing$-crossings does not change within the parent bag, so that remains the same as in \Cref{eq:crossing-counts-ee}. The number of $\SymbolTECrossing$-crossings can be treated independently for both children $V_x, V_y \in N^-(V_i)$, as there are no such crossings between the children. So \Cref{eq:child-crossing-counts-te} is similar to \Cref{eq:crossing-counts-te} except that we sum over both children.
\begin{align}
	\CrossingCountTE{S_i, S_x, S_y} &\coloneqq \sum_{i' \in \{x,y\}} 
	    \Bigg( \sum_{\substack{(u,v) \in E_{i'}\\\sigma_{i'}((u,v)) = \ell}}\Size{\{ w \in V_{i'} \cap V_i \mid u \prec_{i'} w \prec_{i'} v\}} \notag \\
        &+ \sum_{\substack{(u,v) \in E_i\\\sigma_i((u,v)) = r}}\Size{\{ w \in V_{i'} \cap V_i \mid u \prec_i w \prec_i v\}} \Bigg) \label{eq:child-crossing-counts-te}
\end{align}    
For $\SymbolTTCrossing$-crossings, we must handle $\SymbolTTCrossing$-crossings between children. Without loss of generality, let the child $V_x$ that comes first in the input be above child $V_y$. Then such $\SymbolTTCrossing$-crossings will occur when two tracks from the children ``criss-cross'', where a track from the top child goes below the track from the bottom child as in \Cref{fig:crossing-counts}d, captured by \Cref{eq:child-crossing-counts-tt}. Then we add the $\SymbolTTCrossing$-crossings between parent and child as in \Cref{eq:crossing-counts-tt}, but for both children now:

\begin{align}
    \CrossingCountTT{S_i, S_x, S_y} &\coloneqq \Size{\{(u,v) \in (V_x \cap V_i) \times (V_y \cap V_i) \mid v \prec_{i} u\}} \label{eq:child-crossing-counts-tt} \\ 
    &+ \sum_{i' \in \{x,y\}} \Size{\{u,v \in V_{i'} \cap V_i \mid u \prec_{i'} v, v \prec_i u \}}  \label{eq:tree-crossing-counts-tt}
\end{align}
A similar analysis to \Cref{thm:linear-pathwidth,cor:linear-pathwidth} yields the following results for \SymbolLinear-drawings:%
\begin{theorem}
	\label{thm:treewidth-linear-all-bag-edges}
	Let $T$ be a tree decomposition of $G$ of width $\DecWidth$. We can compute
	a crossing-minimal $\SymbolLinear[2]$-drawing of $T$ in $\BigO{n \cdot (\DecWidth*!)^3 \cdot 8^{\DecWidth*^2} \cdot \DecWidth*^4}$ time.
\end{theorem}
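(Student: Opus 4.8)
The plan is to adapt the dynamic-programming analysis of \Cref{thm:linear-pathwidth} to the tree-decomposition recurrence in \Cref{eq:treewidth-general-dp-recurrence}, tracking how the three structural changes affect the running time. The correctness follows the same inductive argument as \Cref{lem:pathwidth-general-dp-correctness}, now carried out over the right-to-left ordering $V_k, V_{k-1}, \ldots, V_1$ of the tree's bags, so I would not re-prove it from scratch but rather note that the induction goes through because every bag has in-degree at most three (at most one parent, at most two children) and the recurrence minimizes over all admissible child states. The real work is the running-time bookkeeping, which I would organize by first recalling that the table $C$ still has $\BigO{k \cdot \DecWidth*! \cdot 2^{\DecWidth*^2}}$ entries, with $k = \BigO{n}$, since the state $(i, \Drawing_i, \Delta(\Drawing_i))$ is defined exactly as before and $\mathcal{G}_{\max} = \BigO{\DecWidth*! \cdot 2^{\DecWidth*^2}}$ for \SymbolLinear[2]-drawings.

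First I would analyze the cost of evaluating a single state. The $\deg^-(V_i) \in \{0,1\}$ cases match the pathwidth DP and contribute the same per-state cost of $\BigO{\DecWidth*! \cdot 2^{\DecWidth*^2} \cdot \DecWidth*^4}$. The new bottleneck is the $\deg^-(V_i) = 2$ case: here the minimization in \Cref{eq:treewidth-general-dp-recurrence} ranges over \emph{ordered pairs} $(S_x, S_y)$ of child states, so instead of iterating over $\BigO{\DecWidth*! \cdot 2^{\DecWidth*^2}}$ predecessor states we must iterate over $\BigO{(\DecWidth*! \cdot 2^{\DecWidth*^2})^2} = \BigO{(\DecWidth*!)^2 \cdot 4^{\DecWidth*^2}}$ pairs. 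This is precisely what turns the $(\DecWidth*!)^2 \cdot 4^{\DecWidth*^2}$ of the pathwidth bound into $(\DecWidth*!)^3 \cdot 8^{\DecWidth*^2}$ once multiplied by the $\DecWidth*! \cdot 2^{\DecWidth*^2}$ states we are filling.

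Next I would argue that the crossing function $\CrossingCount{S_i, S_x, S_y}$ is still evaluable in $\BigO{\DecWidth*^4}$ time. The \SymbolEECrossing-count is unchanged (\Cref{eq:crossing-counts-ee}); the \SymbolTECrossing- and \SymbolTTCrossing-counts (\Cref{eq:child-crossing-counts-te,eq:child-crossing-counts-tt,eq:tree-crossing-counts-tt}) merely sum over the two children and add one new ``criss-cross'' term, each a constant number of passes over $\BigO{\DecWidth*^2}$ edge-pairs or vertex-pairs, so the asymptotic per-evaluation cost does not grow. Multiplying the number of states $\BigO{k \cdot \DecWidth*! \cdot 2^{\DecWidth*^2}}$ by the per-state minimization cost $\BigO{(\DecWidth*!)^2 \cdot 4^{\DecWidth*^2} \cdot \DecWidth*^4}$ and substituting $k = \BigO{n}$ yields the claimed $\BigO{n \cdot (\DecWidth*!)^3 \cdot 8^{\DecWidth*^2} \cdot \DecWidth*^4}$ bound, with the optimum read off from $\min C[1, \cdot, \cdot]$ at the root and the drawing recovered by backtracking.

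The main obstacle I anticipate is neither correctness nor the dominant pairwise-enumeration count, but verifying that the inter-child crossings are handled correctly and \emph{without double counting}: the criss-cross term in \Cref{eq:child-crossing-counts-tt} must account for exactly those \SymbolTTCrossing-crossings between a track of the upper child and a track of the lower child, and I would want to confirm that fixing the vertical order of the two children (the first-in-input child placed above) makes this term well-defined and that no track-edge crossing can occur \emph{between} the two children's subgraphs, which is what justifies treating the \SymbolTECrossing-contributions of $V_x$ and $V_y$ independently in \Cref{eq:child-crossing-counts-te}. Once this separation is justified, the running-time arithmetic is routine and the theorem follows by the same backtracking-and-minimization argument as in \Cref{thm:linear-pathwidth}.
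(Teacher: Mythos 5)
Your proposal is correct and follows essentially the same route as the paper: it keeps the DP table of size $\BigO{n \cdot \DecWidth*! \cdot 2^{\DecWidth*^2}}$, observes that the $\deg^-(V_i)=2$ case forces enumeration of all $\BigO{(\DecWidth*! \cdot 2^{\DecWidth*^2})^2}$ pairs of child states, retains the $\BigO{\DecWidth*^4}$ cost per evaluation of \CrossingCount{\cdot,\cdot,\cdot}, and multiplies these to obtain $\BigO{n \cdot (\DecWidth*!)^3 \cdot 8^{\DecWidth*^2} \cdot \DecWidth*^4}$. The only nitpick is the phrase ``in-degree at most three'': the in-degree is at most two (the two children); the total degree is at most three --- your own parenthetical already says this correctly.
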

\begin{proof}
	The table $C$ has the same size $\BigO{k \cdot \DecWidth*! \cdot 2^{\DecWidth*^2}}$, with $k = \BigO{n}$, and computing $\CrossingCount{\cdot, \cdot, \cdot}$ still takes $\BigO{\DecWidth*^4}$ time as in the pathwidth case. However, computing a single state must now check all possible pairings of its two child states, yielding $\BigO{(\DecWidth*! \cdot 2^{\DecWidth*^2})^2}$ states in the worst case. So, altogether, the running time will be in $\BigO{n \cdot (\DecWidth*!)^3 \cdot 8^{\DecWidth*^2} \cdot \DecWidth*^4}$.
\end{proof}
\begin{corollary}
	\label{cor:linear-treewidth}
	Let $T$ be a tree decomposition of $G$ of width $\DecWidth$.
	We can compute a crossing-minimal $\SymbolLinear[1]$-drawing of $T$ in $\BigO{n \cdot (\DecWidth*!)^3 \cdot \DecWidth*^4}$.
\end{corollary}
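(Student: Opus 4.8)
The plan is to adapt the dynamic programming approach of \Cref{thm:treewidth-linear-all-bag-edges} to the one-page setting, mirroring the way \Cref{cor:linear-pathwidth} specialized \Cref{thm:linear-pathwidth} from the \SymbolLinear[2]- to the \SymbolLinear[1]-case. The key structural observation is that a one-page book drawing $\Drawing_i$ of $G_i$ is determined solely by the linear order $\prec_i$ on $V_i$, without any page assignment $\sigma_i$. Thus I would redefine the state to store only $\prec_i$, which immediately shrinks the per-bag state count from $\BigO{\DecWidth*! \cdot 2^{\DecWidth*^2}}$ to $\BigO{\DecWidth*!}$, since the factor $2^{\DecWidth*^2}$ bounding the number of page assignments disappears.

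\textbf{First} I would verify that the recurrence in \Cref{eq:treewidth-general-dp-recurrence} and the correctness argument of \Cref{lem:pathwidth-general-dp-correctness} carry over unchanged to tree decompositions under the one-page restriction; the case split on $\deg^-(V_i) \in \{0,1,2\}$ is independent of the drawing style, so nothing there needs modification. \textbf{Next} I would check that each crossing-counting function still evaluates in $\BigO{\DecWidth*^4}$ time. For \SymbolEECrossing-crossings, in the one-page setting all edges lie on a single page, so two edges cross exactly when their endpoints alternate in $\prec_i$; this is the specialization of \Cref{eq:crossing-counts-ee} obtained by dropping the $\sigma$-equality condition, and it is still counted in $\BigO{\DecWidth*^4}$ time by examining all pairs of edges. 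The \SymbolTECrossing- and \SymbolTTCrossing-counting functions of \Cref{eq:child-crossing-counts-te,eq:child-crossing-counts-tt,eq:tree-crossing-counts-tt} depend only on the orders $\prec_{i'}$ and the ``criss-cross'' incidences between children, all of which survive the removal of the page assignment, so these remain $\BigO{\DecWidth*^4}$ as well.

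\textbf{The running time} then follows by the same accounting as in \Cref{thm:treewidth-linear-all-bag-edges}. The table $C$ now has size $\BigO{k \cdot \DecWidth*!}$ with $k = \BigO{n}$. Computing a single state with $\deg^-(V_i) = 2$ requires iterating over all pairs of child states, of which there are now $\BigO{(\DecWidth*!)^2}$, and evaluating $\CrossingCount{\cdot,\cdot,\cdot}$ in $\BigO{\DecWidth*^4}$ time for each pair. Multiplying the $\BigO{n \cdot \DecWidth*!}$ states by the $\BigO{(\DecWidth*!)^2 \cdot \DecWidth*^4}$ cost per state yields the claimed bound of $\BigO{n \cdot (\DecWidth*!)^3 \cdot \DecWidth*^4}$, which is exactly the \SymbolLinear[2] bound of \Cref{thm:treewidth-linear-all-bag-edges} with the factor $8^{\DecWidth*^2}$ removed. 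Correctness is inherited from the analogue of \Cref{lem:pathwidth-general-dp-correctness} for tree decompositions, and the optimal drawing is recovered by standard backtracking over the table.

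\textbf{The main obstacle I anticipate} is largely bookkeeping rather than a genuine mathematical difficulty: I must confirm that removing $\sigma_i$ does not break any of the crossing functions — in particular, that the \SymbolTECrossing-count (which in the two-page case depended on which page an edge occupied) degenerates correctly. With only one page, every bag edge sits on the same side, so a track for $w \in V_{i'} \cap V_i$ crosses an edge $(u,v)$ precisely when $w$ lies strictly between $u$ and $v$ in the relevant order, with no page-conditioning; I would want to state this reduction cleanly so that the $\BigO{\DecWidth*^4}$ evaluation bound is transparently preserved. Since this is entirely parallel to how \Cref{cor:linear-pathwidth} was obtained from \Cref{thm:linear-pathwidth}, I expect the argument to be routine once these substitutions are made explicit.
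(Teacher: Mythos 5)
Your overall route is the same as the paper's: the corollary is obtained from \Cref{thm:treewidth-linear-all-bag-edges} exactly as \Cref{cor:linear-pathwidth} is obtained from \Cref{thm:linear-pathwidth}, by shrinking the per-bag state space from $\BigO{\DecWidth*! \cdot 2^{\DecWidth*^2}}$ to $\BigO{\DecWidth*!}$ and repeating the same accounting; your bookkeeping (table of size $\BigO{n \cdot \DecWidth*!}$, up to $\BigO{(\DecWidth*!)^2}$ pairs of child states per bag, $\BigO{\DecWidth*^4}$ per evaluation of the crossing function) matches the paper's.

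There is, however, a concrete flaw in exactly the step you flagged as the ``main obstacle'': the degeneration of the \SymbolTECrossing-count. In the paper's definition of an \SymbolLinear[1]-drawing, all arcs of a disk lie on \emph{one} side of the spine, but \emph{which} side is a free choice that may differ between disks. In the left-to-right tree layout, tracks to the parent leave a disk on one side while tracks to the children leave on the other, so an edge $(u,v)$ of $G_i$ with $w$ between $u$ and $v$ crosses the track of $w$ only if the single page of $D_i$ faces the adjacent bag that $w$'s track leads to. Your claim that a track for $w$ crosses $(u,v)$ ``precisely when $w$ lies strictly between $u$ and $v$, with no page-conditioning'' is therefore false: as stated it over-counts (treating the edges as if they crossed tracks on both sides), and if you instead fix a side convention arbitrarily, the DP silently excludes drawings that place the page on the cheaper side and hence does not return a crossing-minimal \SymbolLinear[1]-drawing. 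The correct specialization is that the page assignment $\sigma_i$ does not vanish but collapses to a single bit $s_i \in \{\ell, r\}$ per bag, which must remain part of the state (or be minimized over inside \CrossingCountTE{\cdot,\cdot,\cdot}). Since this only multiplies the per-bag state count by $2$, the claimed bound $\BigO{n \cdot (\DecWidth*!)^3 \cdot \DecWidth*^4}$ is unaffected, so the corollary survives once this repair is made.
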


{%
Finally, we note that our DP algorithm can be extended to decompositions $T$ with bags of arbitrary degree by generalizing \Cref{eq:treewidth-general-dp-recurrence,eq:child-crossing-counts-te,eq:child-crossing-counts-tt}.
In particular, for \Cref{eq:child-crossing-counts-tt}, we must account for  \SymbolTTCrossing-crossings between every pair of children of $V_i$.
As stated in \Cref{eq:treewidth-general-dp-recurrence}, we consider all possible combinations of states for $V_i$'s children, and the number of such combinations grows with their number.
Moreover, observe that the embedding of these children affects the number of \SymbolTTCrossing-crossings, and the number of potential embeddings is itself exponential in the degree $\Delta$ of $V_i$.
Therefore, allowing bags with non-constant degree in~$T$ introduces an exponential factor in terms of $\Delta$ in the running time of our DP algorithm.
}

\section{Drawing Styles with Circular Vertex Placements}
\label{sec:circular}
In this section, we discuss how to adapt our DP algorithm presented in \Cref{sec:linear} to compute crossing-minimal \SymbolCircular- and \SymbolOrbital-drawings.

\subparagraph*{Circular Drawings.}
Circular drawings arrange the vertices in every disk $D_i$, $i \in [k]$, on a circle and draw the edges $E_i$ as straight lines.
Thus, the drawing $\Drawing_i$ of $G_i$ is uniquely defined by the placement of the vertices $V_i$ in $D_i$.
Since they are evenly distributed on a circle, it suffices to store in $\Drawing_i$ for each bag $V_i \in T$, the counterclockwise order $\prec_i$ of the vertices $V_i$ as they appear in $D_i$, starting at the twelve o'clock position.
Recall that the angle parameter $\alpha$ specifies the starting angle of the first vertex in the order $\prec_i$ in the disk $D_i$.
As the tracks are again straight lines, we do not need to store $\Delta(\Drawing_i)$, reducing the size of $C$ to $\BigO{k \cdot \DecWidth*!}$.

When filling the table $C$, we recall that the recurrence relations from \Cref{eq:pathwidth-general-dp-recurrence,eq:treewidth-general-dp-recurrence} include the number of \SymbolEECrossing-, \SymbolTECrossing-, and \SymbolTTCrossing-crossings involving the vertices of a bag $V_i$, $i \in [k]$, and those of its adjacent bags (if they exist).
The number of \SymbolEECrossing-crossings corresponds, due to the equivalence with one-page book drawings, to the number of edge pairs with alternating endpoints in $\prec_i$ and is, therefore, purely combinatorial in this setting.
In contrast, the \SymbolTECrossing-, and \SymbolTTCrossing-crossings are geometric in nature and depend on the concrete positions of the vertices within the disks.
Observe that these positions depend on $\alpha$ and we remark that the choice of~$\alpha$ influences the number of observed crossings; see also 
\Cref{fig:crossing-alpha-radial}. %
Consequently, we equip the crossing counting functions $\CrossingCountTE{\cdot,\cdot,\cdot}$ and $\CrossingCountTT{\cdot,\cdot,\cdot}$ with the parameter~$\alpha$.
Altogether, this is sufficient to compute crossing-minimal \SymbolCircular-drawings.

\begin{figure}
	\centering
	\includegraphics[page=1]{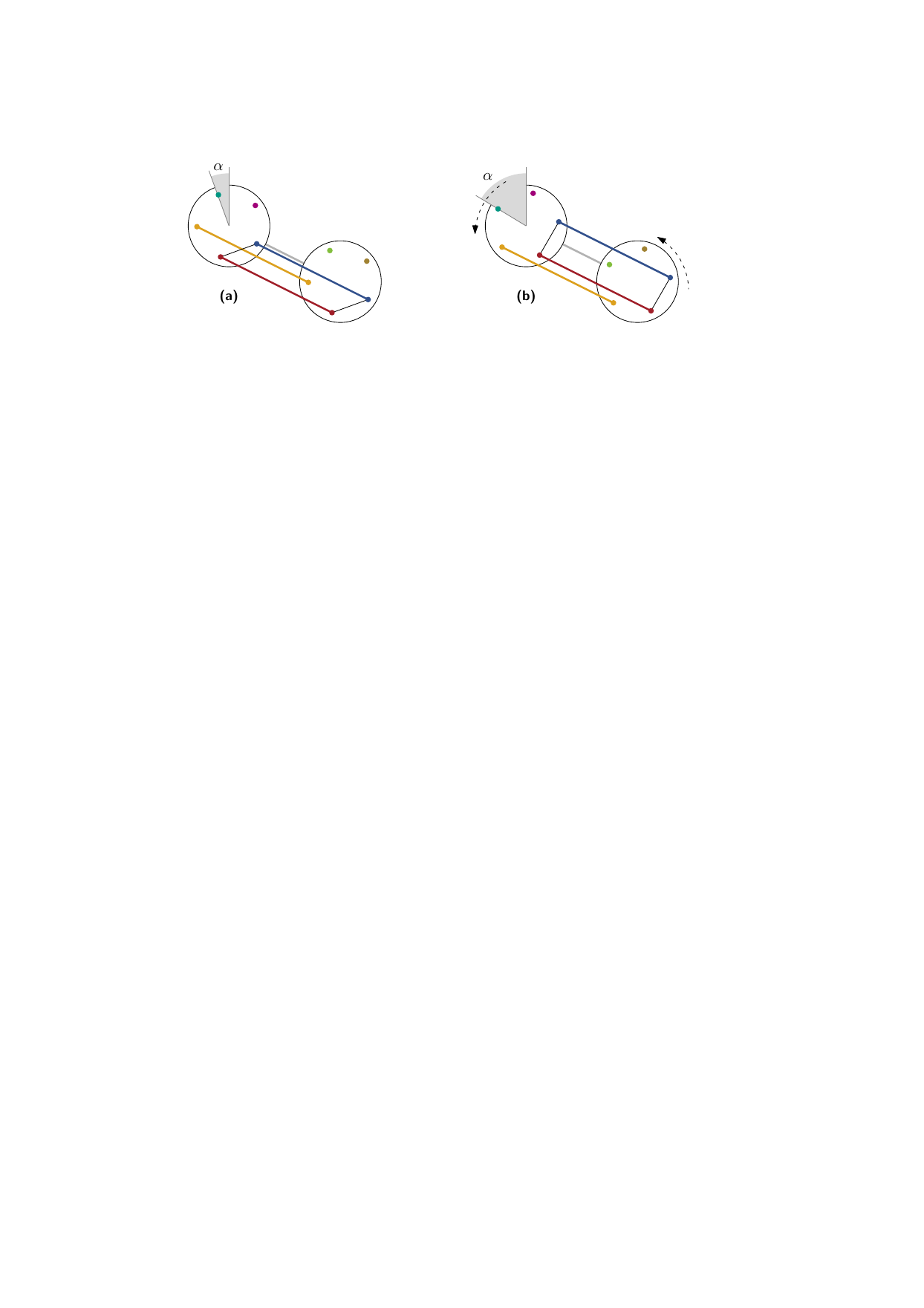}
	\caption{The observed crossings not only depend on the order of the vertices in the disks but also on the choice of $\alpha$ (also indicated with the dashed arrows): The $\SymbolTECrossing$-crossing with the yellow track in \textbf{\textsf{(a)}} is not present in \textbf{\textsf{(b)}} although we use the same linear orders in both visualizations.}
	\label{fig:crossing-alpha-radial}
\end{figure}

\subparagraph*{Orbital Drawings.}
Orbital drawings extend circular drawings by routing the tracks 
as orbits around the vertices; recall \Cref{fig:witness-3-decomp}d.
Since tracks must remain within the track-routing area, every \SymbolOrbital-drawing is free of \SymbolTECrossing-crossings.
However, unlike in the other two drawing styles, the drawing of tracks within a disk is no longer uniquely determined by the placement of the vertices.
In particular, tracks can be assigned to different orbits and may orbit the drawing~$\Drawing_i$ of $G_i$ either clockwise or counterclockwise.
Thus, while it still suffices to store a linear order $\prec_i$ of $V_i$ for $\Drawing_i$, representing $\Delta(\Drawing_i)$ now requires additional care.

We model $\Delta(\Drawing_i)$ with two parts.
First, %
to model the assignment of tracks to orbits, we introduce an \emph{orbit assignment function} $\lambda_i \colon V_i \to [\DecWidth*]$, where $\lambda_i(v)$ specifies the orbit used for the tracks of vertex $v$, numbered
from the center of the disk $D_i$ outward; see \Cref{fig:model-orbits}.
Two tracks for different vertices $u,v \in V_i$ may share the same orbit if they do not intersect.
Otherwise, we require $\lambda_i(u) \neq \lambda_i(v)$ %
to guarantee that no two tracks for different vertices share the same orbit simultaneously.
Second, to model the direction in which the tracks orbit, i.e., clockwise or counterclockwise, 
we consider %
the (up to two) children $V_x$ and $V_y$ and the parent $V_q$ of the bag $V_i$ in the tree decomposition $T$, if they exist.
We define a \emph{direction function} $\delta_i\colon V_i \times \{x,y,q\} \to \{\text{cw}, \text{ccw}\}$ which, for each vertex $v \in V_i$, captures the direction (clockwise ($\text{cw}$) or counterclockwise ($\text{ccw}$)) of the track from~$V_i$ to each adjacent bag.
\Cref{fig:model-orbits} illustrates the semantic of $\delta_i$ and underlines the importance of storing this information separately for each adjacent bag.
Thus, for a given drawing $\Drawing_i$ of $G_i$, we set $\Delta(\Drawing_i) = (\lambda_i, \delta_i)$.
This increases the size of the %
DP table $C$ %
to $\BigO{k \cdot \DecWidth*! \cdot {\left(8\DecWidth*\right)}^{\DecWidth*}}$.

\begin{figure}
	\centering
	\includegraphics[page=1]{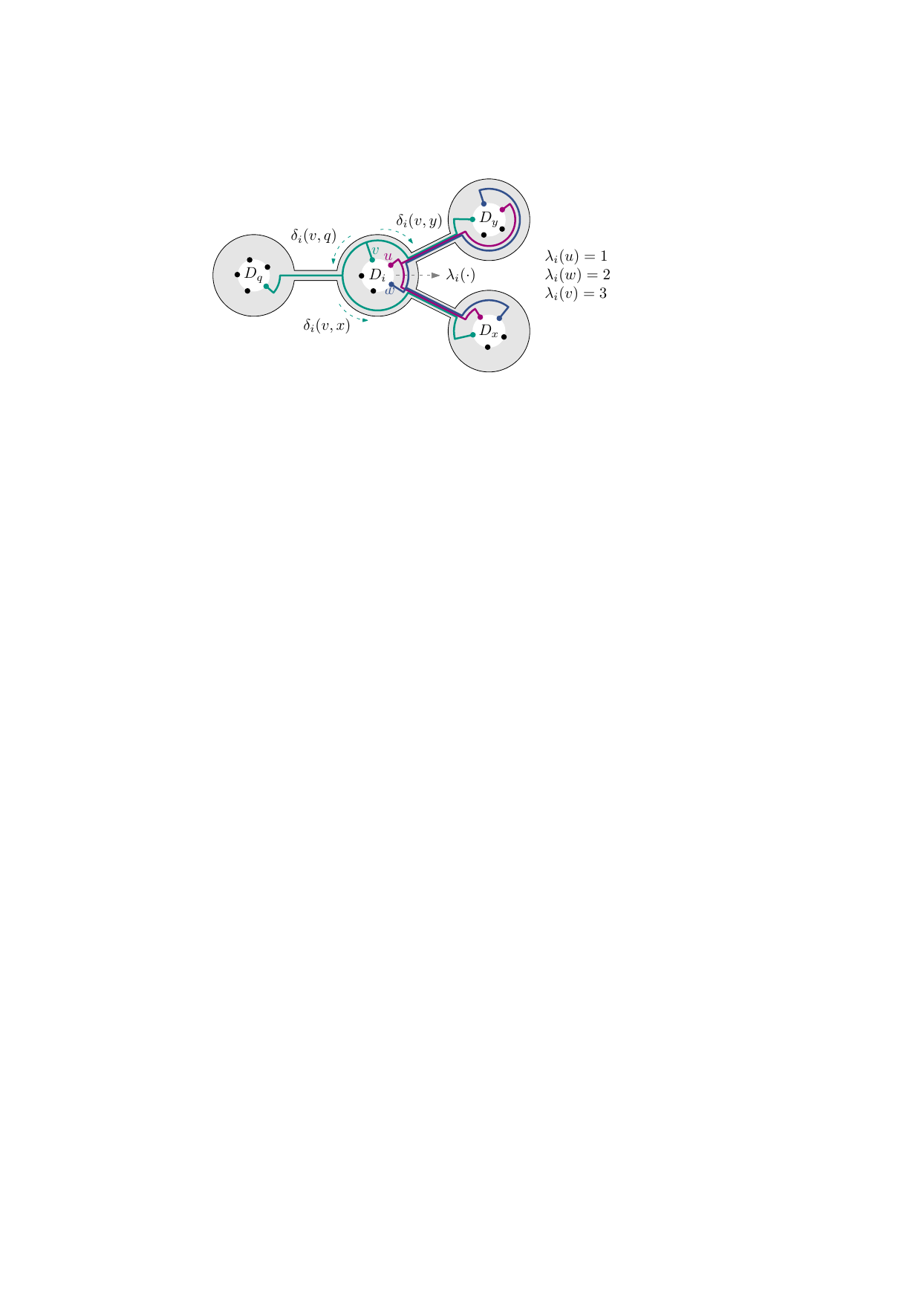}
	\caption{Visualization of a (partial) \SymbolOrbital-drawing and the information that we store in the table $C$.
    The track routing area is indicated in gray.
    Black vertices represent arbitrary other vertices. In this drawing, we have $\lambda_i(u) = 1$, $\lambda_i(w) = 2$, $\lambda_i(v) = 3$, $\delta_i(v,y) = \text{cw}$, and $\delta_i(v,q) = \delta_i(v,x) = \text{ccw}$. Note that these orientations are required to minimize the \SymbolTTCrossing-crossings involving the tracks for vertex $v$.}
	\label{fig:model-orbits}
\end{figure}

Regarding the computation of the number of crossings, we observe that the function $\CrossingCountEE{\cdot}$ from \SymbolCircular-drawings can be reused, and that \SymbolTECrossing-crossings do not occur in this drawing style.
\SymbolTTCrossing-crossings involving a track for a vertex $v \in V_i$ occur in the following two cases.
First, if $v$ also appears in an adjacent bag, and a track for some vertex $u \in V_i$ with $\lambda_i(u) < \lambda_i(v)$ passes by the position of $v$ in $\Drawing_i$, then these two tracks cross: The track for $v$ must cross the one of $u$ to reach its orbit; see \Cref{fig:orbit-counting}a.
Second, for each adjacent bag $V_j$ with $v \in V_j$, we count the number of vertices $u \in V_i$ such that a track for $u$ blocks the path from $D_i$ to~$D_j$, see \Cref{fig:orbit-counting}b.
In both cases, the crossings can be determined from the starting angle $\alpha$, the order of the children, and the stored information, i.e., $\Drawing_i$ and $\Delta(\Drawing_i)$, for the disk $D_i$ of~$V_i$ and the disks of its adjacent bags.
Hence, the function $\CrossingCountTT{\cdot,\cdot,\cdot}$ still only depends on the states for $V_i$ and its child bags.
Finally, since tracks to occurrences of the same vertex $v \in V_i$ in different bags can overlap, we must avoid double-counting in such cases; see again \Cref{fig:orbit-counting}a.
\begin{figure}
	\centering
	\includegraphics[page=1]{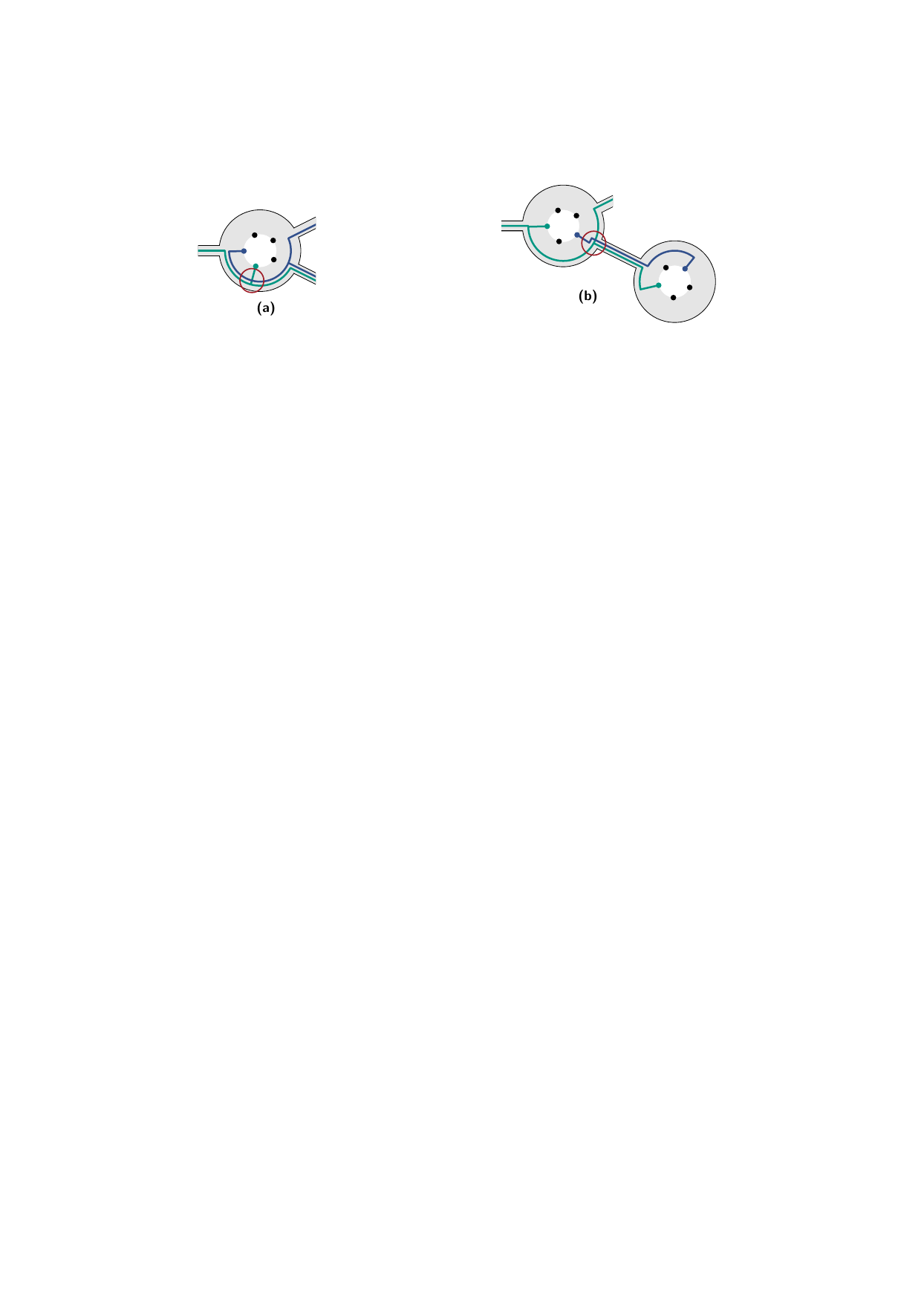}
	\caption{The two cases how \SymbolTTCrossing-crossings in \SymbolOrbital-drawings can arise:
	In \textbf{\textsf{(a)}}, the blue tracks orbit inside the green ones, thus causing a crossing when they pass the green vertex. \SymbolTTCrossing-crossings are highlighted with the red circles.
    The crossing in \textbf{\textsf{(b)}} occurs because the green track blocks the visibility to the lower-right disk for the blue track. 
	Observe that there is only one \SymbolTTCrossing-crossing in \textbf{\textsf{(a)}}, although there is technically one blue track from the upper- and one from the lower-right disk.
	}
	\label{fig:orbit-counting}
\end{figure}
We summarize below the findings of this section. Recall that $k = \BigO{n}$ holds.
\begin{restatable}\restateref{thm:decomposition-circular}{theorem}{theoremCircularDrawings}
    \label{thm:decomposition-circular}
    Let $T$ be a decomposition of $G$ of width $\DecWidth$.
    We can compute a crossing-minimal 
    \begin{itemize}
        \item $\SymbolCircular$-drawing of $T$ in $\BigO{n \cdot (\DecWidth*!)^2 \cdot \DecWidth*^4}$ time if $T$ is a path decomposition.
        \item $\SymbolCircular$-drawing of $T$ in $\BigO{n \cdot (\DecWidth*!)^3 \cdot \DecWidth*^4}$ time if $T$ is a tree decomposition.
        \item $\SymbolOrbital$-drawing of $T$ in $\BigO{n \cdot (\DecWidth*!)^2 \cdot {\left(4\DecWidth*\right)}^{2\DecWidth*} \cdot \DecWidth*^4}$ time if $T$ is a path decomposition.
        \item $\SymbolOrbital$-drawing of $T$ in $\BigO{n \cdot (\DecWidth*!)^3 \cdot {\left(8\DecWidth*\right)}^{3\DecWidth*} \cdot \DecWidth*^4}$ time if $T$ is a tree decomposition.
	\end{itemize}
\end{restatable}
\begin{prooflater}{ptheoremCircularDrawings}
    Let $k$ denote the number of bags of $T$.
    Recall $k = \BigO{n}$ and that our dynamic program (DP) traverses the decomposition $T$ and considers every bag $V_i \in T$, %
    and all the possible drawings $\Drawing_i$ and $\Delta(\Drawing_i)$ of $G_i$ and its incident tracks, respectively.
    For each of them, 
    all possible combinations with %
    states of the two children of $V_i$, if they exist, or $V_{i - 1}$, in case $T$ is a path decomposition.
    \Cref{lem:pathwidth-general-dp-correctness} and the arguments in \Cref{sec:linear} ensure correctness of this DP for our definitions of states in the DP.
    Therefore, we focus for the remainder of the proof on the claimed running times and discuss $\SymbolCircular$- and $\SymbolOrbital$-drawings separately.

    \proofsubparagraph*{$\SymbolCircular$-drawings.}
    Recall that a drawing $\Drawing_i$ for $G_i$, $i \in [k]$ can, %
    thanks to the angle parameter $\alpha$, 
    be defined by the counterclockwise order $\prec_i$ of the vertices $V_i$ starting at the twelve o'clock position. %
    As the tracks are straight lines, the size of the DP-table $C$ is  $\BigO{k \cdot \DecWidth*!}$.
    The position for each vertex can be computed in \BigO{1} time.
    Thus, the evaluation of the crossing counting function $\CrossingCount{\cdot, \cdot}$ still requires $\BigO{{\DecWidth*}^4}$ time as for \SymbolLinear-drawings.
    For computing one state of the DP, we must check all $\BigO{\left(\DecWidth*!\right)^2}$ combinations of the up to two children states.
    This yields the claimed running for tree decompositions.
    For path decompositions, only $\BigO{\DecWidth*!}$ states need to be evaluated as each bag has degree at most two.
    
    \proofsubparagraph*{$\SymbolOrbital$-drawings.}
    Recall that for \SymbolOrbital-drawings, we need to specify in addition the drawing of the tracks inside the disk $D_i$, $i \in [k]$, using an orbit assignment function $\lambda_i$ and a direction function $\delta_i$.
    As $\mathcal{G}_{\max} = \BigO{\DecWidth*!}$ and $\mathcal{D}_{\max} = \BigO{\DecWidth*^{\DecWidth*} \cdot 2^{3\DecWidth*}}$, this results in a DP-table $C$ of size $\BigO{k \cdot \DecWidth*! \cdot {\left(8\DecWidth*\right)}^{\DecWidth*}}$.
    As for \SymbolCircular-drawings, we can compute the position for each vertex in \BigO{1} time, which yields the drawing $\Drawing_i$ of $G_i$.
    Together with $\Delta(\Drawing_i) = (\lambda_i, \delta_i)$, 
    this is sufficient to describe the drawing of the tracks in the disk $D_i$.
    Despite not having \SymbolTECrossing-crossings, computing the number $\CrossingCount{\cdot,\cdot}$ of crossing still takes $\BigO{{\DecWidth*}^4}$ time.
    Since we have to evaluate for each state all combinations with the $\BigO{\DecWidth*! \cdot {\left(8\DecWidth*\right)}^{\DecWidth*}}$-many states for each of its up to two children, the claimed running time for tree decompositions follows.
    If $T$ is a path decomposition, the fact that we have degree at most two reduces the size of the table $C$ to $\BigO{k \cdot \DecWidth*! \cdot {\left(4\DecWidth*\right)}^{\DecWidth*}}$ and the evaluation time for a single state to $\BigO{\left(\DecWidth*!\right) \cdot {\left(4\DecWidth*\right)}^{\DecWidth*} \cdot \DecWidth*^4}$.
\end{prooflater}

\section{Heuristic Approaches to Compute Linear Witness Drawings}
\label{sec:heuristics}
A proof-of-concept implementation of our DP from \Cref{sec:linear-tree} for \SymbolLinear[2]-drawings of tree decompositions required over 8 minutes to terminate even for small decompositions of width $\DecWidth = 3$ with four bags.
This is no surprise given the running time bounds established in \Cref{thm:treewidth-linear-all-bag-edges}.
For tree decompositions of larger width, %
this became too large in practice with a single state needing to check up to $(5!\cdot 2^{5^2})^2 \approx 10^{19}$ possible configurations for $\DecWidth = 4$.

Consequently, we do not want to restrict our attention to exact, but slow, algorithms, but also explore heuristics to efficiently compute drawings with a small number of crossings, but without any formal guarantee on optimality.
In this section, we describe three heuristics for \SymbolLinear[2]-drawings.
We evaluate their performance in \Cref{sec:evaluation} and provide in \Cref{fig:sample-wagner,fig:sample-brinkmann} sample drawings computed by them.

\subparagraph*{Global Book Drawing Heuristic.}
Since we visualize in an \SymbolLinear[2]-drawing each graph $G_i$, $i \in k$, as a two-page book drawing of $G_i$, our first heuristic computes such a drawing $\langle \prec, \sigma \rangle$ for the entire graph $G$.
After that, we set for each disk $D_i$ $\prec_i\ =\ \prec\mid_{V_i}$ and $\sigma_i(e) = \sigma(e)$ for every edge $e \in E_i$, i.e., we project the drawing $\langle \prec, \sigma \rangle$ down into each bag.
If $T$ is a tree decomposition, we additionally perform a bottom-up traversal of $T$, choosing for each internal bag $V_i$ the embedding of its children that yields the fewer \SymbolTECrossing- and \SymbolTTCrossing-crossings locally.

Since computing a crossing-minimal one-page book drawing for a graph is an \NP-hard problem~\cite{MKNF.Ncc.1987}, we already employ a heuristic for this step.
Klawitter, Mchedlidze, and Nöllenburg~\cite{KMN.EEB.2018} 
performed an extensive experimentally evaluation on book drawing heuristics.
In their conclusion, they recommended the heuristic \texttt{conGreedy+}, which places one vertex after the other on the spine, selecting the next vertex based on the number of already placed neighbors, and extends the spine order and page assignment greedily.
While for two-page book drawings of $k$-trees, which are maximal graphs of treewidth $k$, it was slightly outperformed by other heuristics, it performed best overall and in particular for graphs with a sub-quadratic number of edges.
The heuristic has a running time of \BigO{m^2n}~\cite{KMN.EEB.2018} %
and we call it \texttt{Global conGreedy+}.

\subparagraph*{Local Book Drawing Heuristic.}
Computing a book drawing for the whole graph $G$ and then applying it to each disk $D_i$, $i \in k$, has two main disadvantages.
On the one hand, we usually do not visualize the entire graph $G$ in a single bag, but several induced subgraphs $G_i$ separately.
Therefore, the drawing $\langle \prec_i, \sigma_i \rangle$ for $G_i$ might contain unnecessary crossings.
On the other hand, the above-described heuristic is, apart from the embedding step, unaware of \SymbolTECrossing- and \SymbolTTCrossing-crossings.
This, in particular, leads to situations where swapping the page assignment of one edge $e \in E_i$ or the entire edge set $E_i$ can dramatically reduce the number of \SymbolTECrossing-crossings.
The latter occurs especially at the root and leave bags of $T$.

Therefore, we now employ the algorithm \texttt{conGreedy+} in each bag $V_i$ separately during a top-down traversal of $T$, thus computing for each $G_i$ a book drawing from scratch.
Since the drawing for the parent of a bag $V_i$ has been determined, we can take the potential \SymbolTTCrossing- and \SymbolTECrossing-crossings into account. {%
More precisely, when selecting the spine position of a vertex $v \in V_i$ and the page assignment for its incident edges in $G_i$ we also compute the number of \SymbolTECrossing- and \SymbolTTCrossing-crossings that this causes based on the drawing of the parent, similar to \Cref{eq:child-crossing-counts-te,eq:child-crossing-counts-tt}}.
Furthermore, we forward this information to its children when
considering their %
two possible embeddings %
if necessary.
We call this heuristic \texttt{Local conGreedy+} and, %
since the size of each $G_i$ is bounded, its running time is $\BigO{n \cdot \DecWidth*^5}$.

\subparagraph*{Local Search.}
In addition to the two above-mentioned heuristics, we also perform a local search to reduce the number of crossings.
Given an \SymbolLinear[2]-drawing of $T$, we traverse $T$ bottom up.
For each bag $V_i \in T$, we perform one or several of the following movements.
\begin{description}
    \item[Vertex-Swap:] Take two vertices $u, v \in V_i$ and swap their position on the spine $\prec_i$.
    \item[Edge-Swap:] Take two edges $e_1, e_2 \in E_i$ with $\sigma_i(e_1) \neq \sigma_i(e_2)$ and swap their page assignment.
    \item[Edge-Flip:] Take an edge $e \in E_i$ and assign it to the other page.
    \item[Embedding-Flip:] Flip the embedding of the subtree $V_i$, i.e., the order of its children.
\end{description}
For a given bag $V_i \in T$, we perform a hill-climbing approach and exhaustively apply above movements until no further improvement can be made in the drawing $\langle \prec_i, \sigma_i \rangle$.
Then, we proceed to the next bag $V_j \in T$.
Note that when evaluating the number of crossings, we keep the drawings in the remaining bags fixed.
Thus, after a bottom-up traversal of $T$, we perform a second traversal, this time top-down.
Afterwards, or after a predetermined period of time, we return the best solution found so far.

\section{Experimental Evaluation}
\label{sec:evaluation}
We implemented the dynamic program for \SymbolLinear[2]-crossings as described in \Cref{sec:linear-tree} as well as all of the heuristics presented in \Cref{sec:heuristics} in \texttt{Python 3.6.9}.
All algorithms run on a system with Intel Xeon E5-2640 v4 10-core processors at 2.40 GHz.
We set hard limits for memory and time of 96 GB and fifteen minutes, respectively, but memory was not a limitation for our algorithms.
If we improve a solution from a heuristic using local search, we indicate this using ``\texttt{\& LS}''.
Also the local search has a maximum running time of fifteen minutes.
The computed drawings are visualized using \texttt{d3.js} on an independent system and this final step is not considered here.
The instances are based on a public repository containing graphs extracted from ``Sage'' together with a tree decomposition of them.\footnote{%
The original dataset is available at \url{https://github.com/freetdi/named-graphs}.
The filtered dataset, source code, and evaluation code can be found on \href{https://doi.org/10.17605/OSF.IO/QFZ5V}{OSF}~\cite{SUPPLEMENTARY_MATERIAL}.
The application can also be accessed online at \url{https://www.ac.tuwien.ac.at/projects/visualizing-treewidth/}.
}
Out of the 150 graphs, 55 of them were equipped with a {%
(tree or path)} decomposition $T$ where each bag{%
, i.e., node of the decomposition $T$,} has degree at most three {%
in $T$} and we used these in our experiments.
We have no information on how these decompositions were computed or if they are optimal.
The width of the decompositions ranges between one and 53 and they have between one and 143 bags.
For 60\% of the instances, both parameters were simultaneously below ten. %
The graphs have between four and 143 vertices and between six and 1,008 edges, with a mean density, i.e., $2m / (n(n - 1))$, of $0.37$.

\subparagraph*{Computed Drawings.}
We provide in \Cref{fig:sample-wagner,fig:sample-brinkmann} sample drawings of two graphs together with the witness drawings computed by our different algorithms.
Generally, we can see that the DP and \texttt{Global conGreedy+} produce drawings with a more consistent arrangement of the vertices on the individual spines.
Comparing Figures~\ref{fig:sample-wagner}b and~f, we can observe that the obtained drawing from \texttt{Global conGreedy+ \& LS} is nearly identical to the one of the DP, confirming the observed low crossing numbers.
The consistency across different spines yields also a visually more pleasing drawing, as we can see in \Cref{fig:sample-brinkmann}.
We provide in \shortLong{the full version~\cite{ARXIV}}{\Cref{app:drawings}} more sample drawings computed by our algorithms and analyze below their running time and the quality of the produced drawings.

\begin{figure}
    \centering
    \includegraphics[page=1]{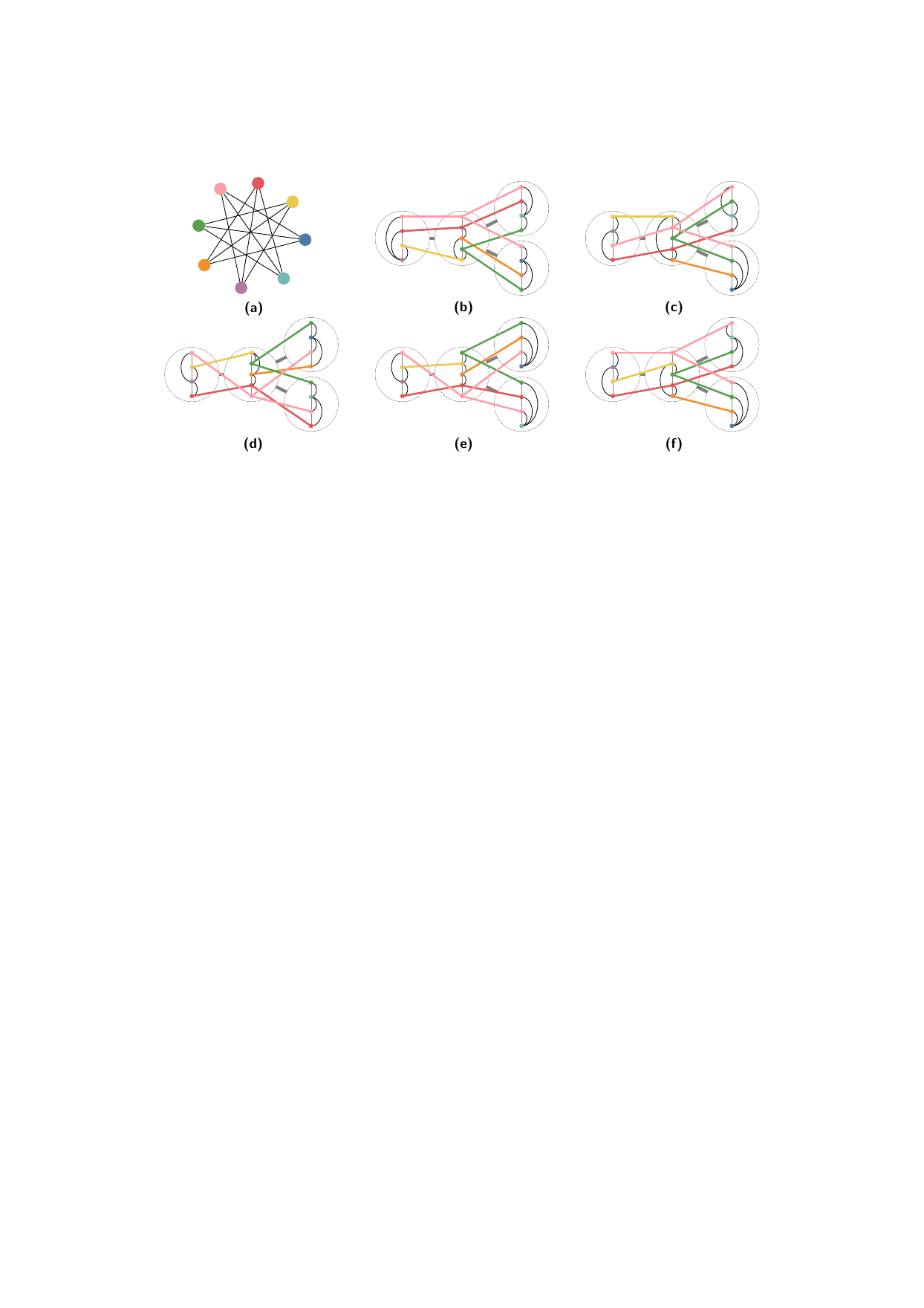}
    \caption{Comparison of the drawings computed by our algorithms for the decomposition of the \emph{Wagner} graph; we indicate the number of crossings in the parenthesis.
    \textbf{\textsf{(a)}} Force-based drawing of the graph, \textbf{\textsf{(b)}} DP (3), \textbf{\textsf{(c)}} \texttt{Global conGreedy+} (8), \textbf{\textsf{(d)}} \texttt{Local conGreedy+} (9), \textbf{\textsf{(e)}} \texttt{Local conGreedy+ \& LS} (5), and \textbf{\textsf{(f)}} \texttt{Global conGreedy+ \& LS} (4).}
    \label{fig:sample-wagner}
\end{figure}

\begin{figure}
    \centering
    \includegraphics[page=1]{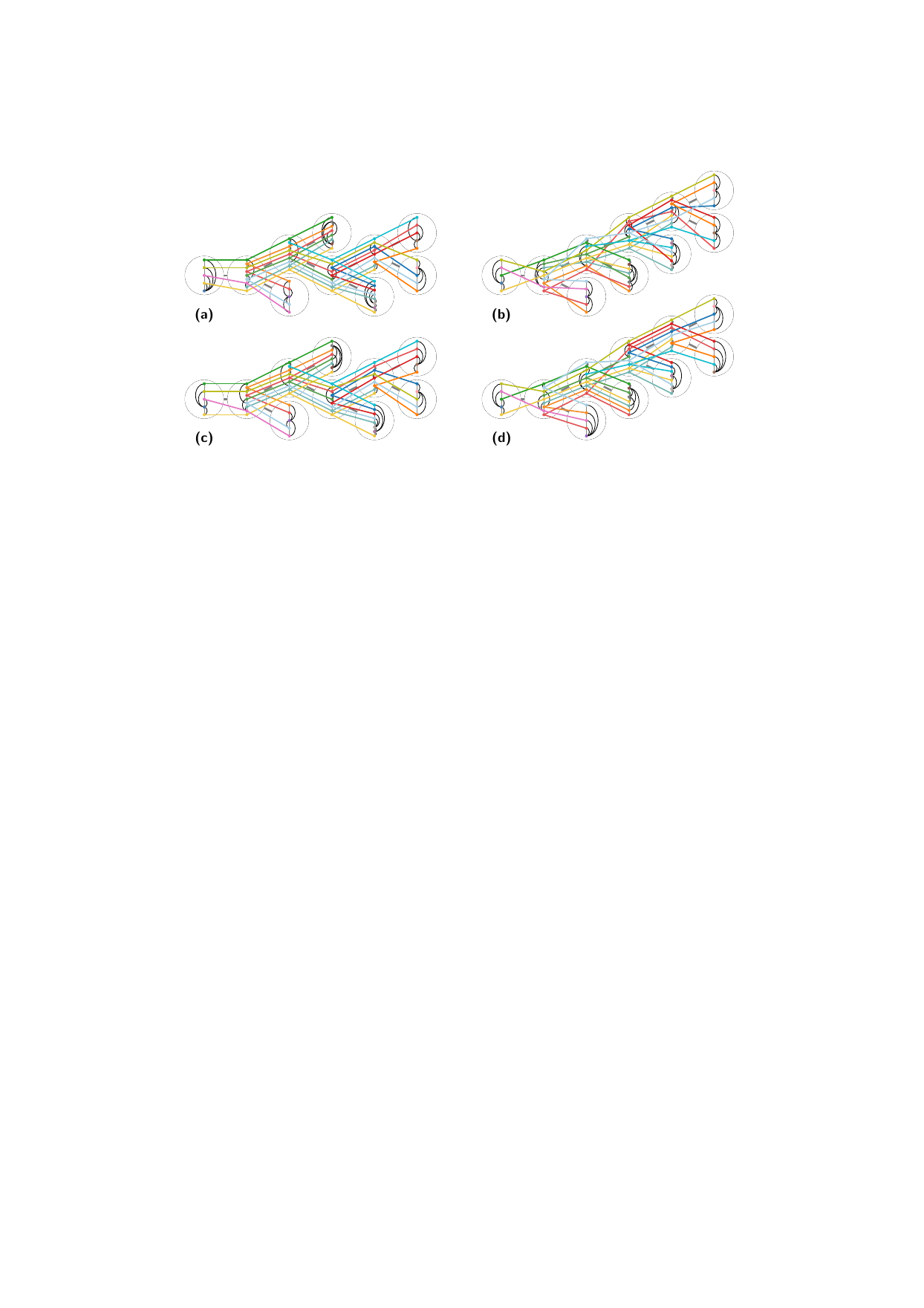}
    \caption{Comparison of the drawings computed by our heuristics for the \emph{Brinkmann} graph\footnotemark; we indicate the number of crossings in the parenthesis.
    \textbf{\textsf{(a)}} \texttt{Global conGreedy+} (113), \textbf{\textsf{(b)}} \texttt{Local conGreedy+} (105), 
    \textbf{\textsf{(c)}} \texttt{Global conGreedy+ \& LS} (64), and
    \textbf{\textsf{(d)}} \texttt{Local conGreedy+ \& LS} (72).}
    \label{fig:sample-brinkmann}
\end{figure}
\footnotetext{A drawing of the graph can be found online: \url{https://en.wikipedia.org/wiki/Brinkmann_graph}.}

\subparagraph*{Running Times, Tradeoffs, and the Influence of the Width.}
All heuristics terminated on every instance within fifteen minutes.
The DP could solve only thirteen instances within the time limit.
To evaluate the crossing minimization performance of the heuristics on a larger set of instances, we ran the DP with a time limit of six hours, after which 21 of the 55 instances, all of width at most four, could be solved optimally.
For the remaining 34 instances, \texttt{Global conGreedy+} obtained four, \texttt{Local conGreedy+} three, \texttt{Global conGreedy+ \& LS} 18, and \texttt{Local conGreedy+ \& LS} 20 times a solution with the fewest crossings.

\Cref{fig:per-instance-plot} shows the relative number of crossings, compared to the best-performing algorithm, i.e., the DP where available and one of the heuristics otherwise, and running time for each algorithm and instance.
We group instances by width and sort them within each group based on the number of bags. 
{%
In \shortLong{the full version~\cite{ARXIV}}{\Cref{fig:per-instance-plot_dp}}, %
we provide a filtered version of \Cref{fig:per-instance-plot}, focusing on instances for which the DP terminated within six hours.}
Regarding the running time, we can observe in \Cref{fig:per-instance-plot}a that all heuristics without the local search step terminated within a second on every instance except the largest ones.
\begin{figure}
    \centering
    \includegraphics[page=1,width=\linewidth]{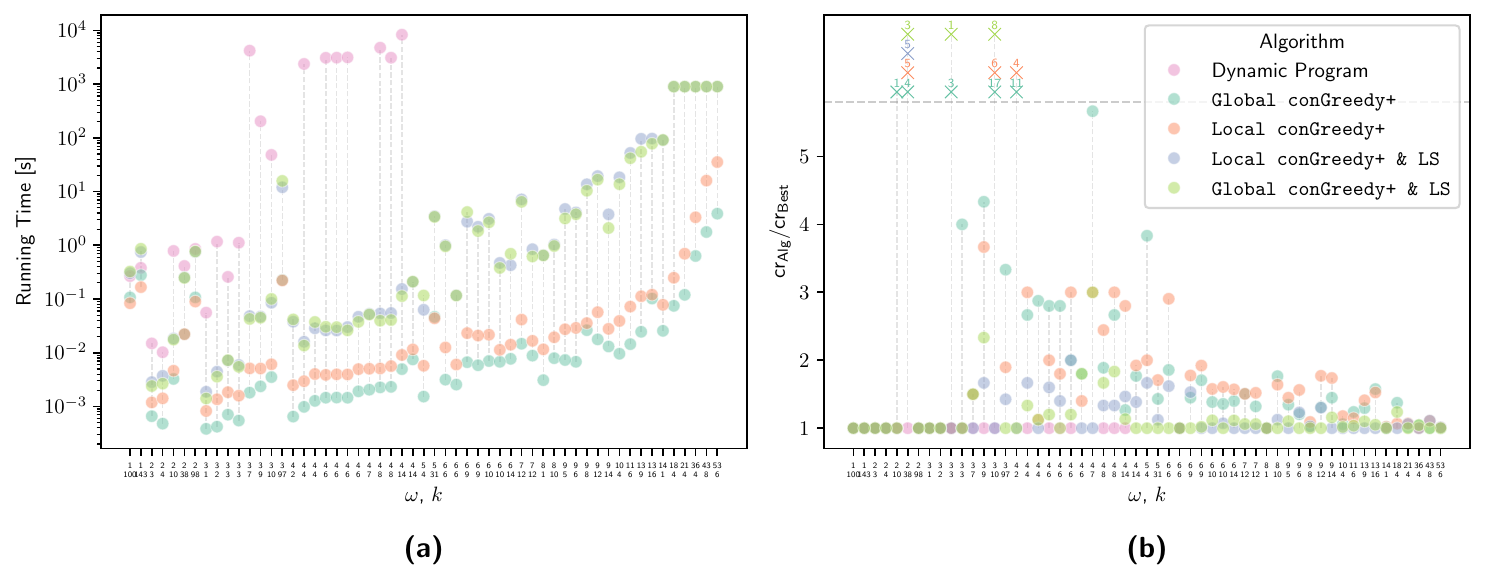}
    \caption{\textbf{\textsf{(a)}}~Wall-clock running time ($\log_{10}$-scale) of the heuristics and \textbf{\textsf{(b)}}~their optimality-ratio compared to the best found solution for different values of $\DecWidth$ and number of bags~$k$.
    We group instances based on $\DecWidth$, sort each group by~$k${%
    , and indicate on the $x$-axis the value of  $\DecWidth$ and $k$ in the first and second row, respectively.}    
    Vertical lines connect data points for one instance.}
    \label{fig:per-instance-plot}
\end{figure}
As expected, \texttt{Global conGreedy+} is the fasted heuristic, since it has to compute only one book drawing for the entire graph.
Regarding the optimality-ratio, \Cref{fig:per-instance-plot}b shows that for instances with larger width, \texttt{Global conGreedy+} seems to work slightly better than %
\texttt{Local conGreedy+}. %
We believe that computing a book drawing for each $G_i$ individually could yield situations where the obtained spine orders yield many \SymbolTTCrossing-crossings.
{%
In particular, we suspect that spine orders that ``locally'' admit few crossings could ``globally'' introduce, for example, unexpected ``criss-cross'' \SymbolTTCrossing-crossings (recall \Cref{fig:crossing-counts}d).}
We see the strength of our DP for decompositions of width two or three.
For larger width, the high running time becomes impractical.
However, small-width instances are often simple enough that all heuristics can solve them to optimality in a few milliseconds.
Interestingly, for instances with very large treewidth, the benefit of local search drastically decreases: Despite performing movement operations for the entire fifteen minutes, the obtained solution is, if at all, only negligibly better then the corresponding heuristic without the local search.
However, for instances of moderate width, local search considerably improved the initial solution.

\begin{statelater}{perInstancePlotDP}
\begin{figure}
    \includegraphics[page=1,width=\linewidth]{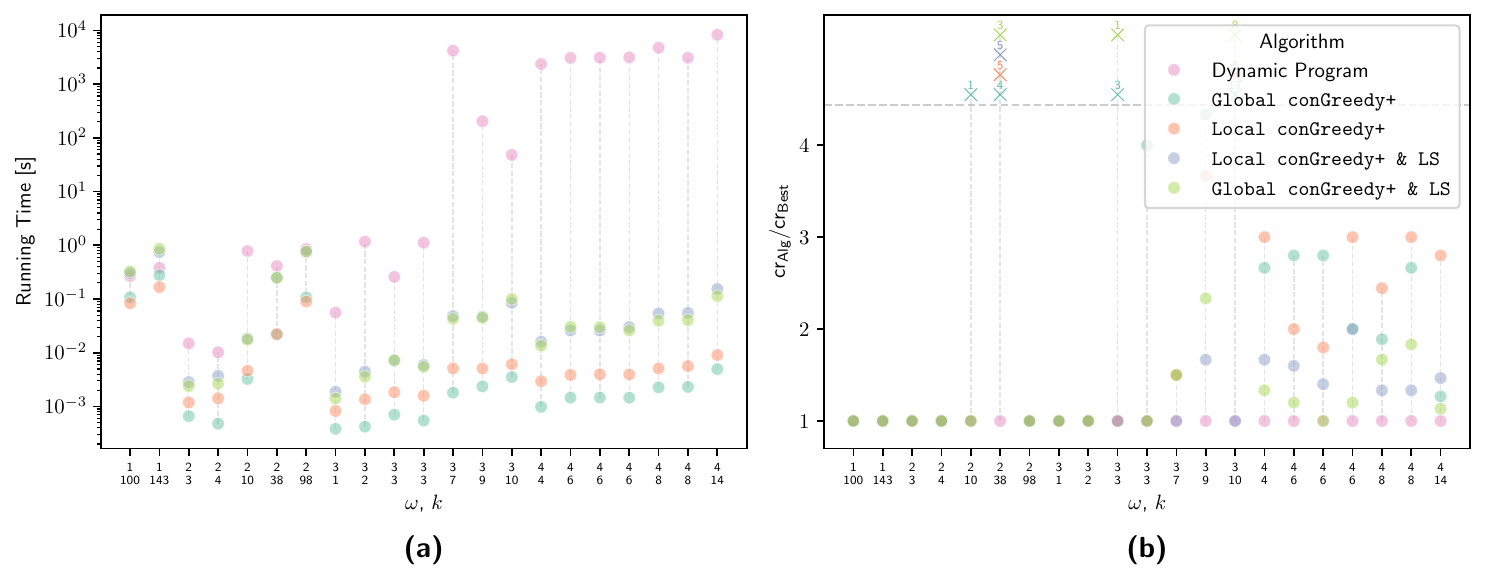}
    \caption{Filtered version of \Cref{fig:per-instance-plot}, showing only instances for which the DP terminated within six hours.}
    \label{fig:per-instance-plot_dp}
\end{figure}
\end{statelater}

\section{Concluding Remarks {%
and Future Directions}}
In this paper, we have initiated the study of visualizing tree and path decompositions of graphs to visually certify their treewidth and pathwidth, respectively.
Our experimental study shows that for decompositions of width at most three, we can compute exact crossing-minimal drawings  in a few seconds.
For decompositions of medium to larger width, the heuristics combined with a local search step seem to be the preferred method.
We see the computation of a crossing-minimal drawing where we do not restrict the placement of the child bags or the starting angle inside the disk as a challenging open algorithmic problem.
{%
Furthermore, the taxonomy presented \Cref{sec:taxonomy} can be seen as a first step towards exploring the design space for witness drawings of graphs with bounded pathwidth or treewidth.
In particular, future work should explore further drawing styles for (i) the graph $G_i$ inside each bag $V_i$, (ii) tracks between the bags, (iii) the underlying tree (or path) $T$, and possibilities (iv) to distinguish different vertices other than color.
On top of that, a comprehensive design study can also consider (v) the option to not visualize an edge $uv$ in every bag $V_i$ with $u,v \in V_i$ but only in one such bag, (vi) the algorithmic and visual impact of the structure of the underlying tree/path $T$, e.g., its number of bags, their maximum degree, or the ``niceness'' of $T$, and (vii) further optimization criteria other than the (unweighted) number of crossings.

Additionally, we believe that the exploration of the design space should go hand in hand with the development of further exact and heuristic algorithms and an empirical evaluation of the drawing styles proposed in this article or extracted from the (extended) taxonomy, e.g., with a user study.}
Finally, designing and computing suitable witness drawings for other width parameters of graphs is an interesting{%
, independent} direction for future work.

\bibliography{references}

\begin{thebibliography}{10}

\bibitem{argyriou2010metro}
Evmorfia~N. Argyriou, Michael~A. Bekos, Michael Kaufmann, and Antonios Symvonis.
\newblock On metro-line crossing minimization.
\newblock {\em Journal of Graph Algorithms and Applications}, 14(1):75--96, 2010.
\newblock \href {https://doi.org/10.7155/jgaa.00199} {\path{doi:10.7155/jgaa.00199}}.

\bibitem{ARONOV2011329}
Boris Aronov, Muriel Dulieu, and Ferran Hurtado.
\newblock Witness (delaunay) graphs.
\newblock {\em Computational Geometry}, 44(6):329--344, 2011.
\newblock \href {https://doi.org/10.1016/j.comgeo.2011.01.001} {\path{doi:10.1016/j.comgeo.2011.01.001}}.

\bibitem{ADH.WRG.2011}
Boris Aronov, Muriel Dulieu, and Ferran Hurtado.
\newblock Witness rectangle graphs.
\newblock In Frank Dehne, John Iacono, and J{\"{o}}rg{-}R{\"{u}}diger Sack, editors, {\em Proc 12th International Symposium on Algorithms and Data Structures (WADS'11)}, volume 6844 of {\em Lecture Notes in Computer Science}, pages 73--85. Springer, 2011.
\newblock \href {https://doi.org/10.1007/978-3-642-22300-6_7} {\path{doi:10.1007/978-3-642-22300-6_7}}.

\bibitem{ARONOV2013894}
Boris Aronov, Muriel Dulieu, and Ferran Hurtado.
\newblock Witness gabriel graphs.
\newblock {\em Computational Geometry}, 46(7):894--908, 2013.
\newblock \href {https://doi.org/10.1016/j.comgeo.2011.06.004} {\path{doi:10.1016/j.comgeo.2011.06.004}}.

\bibitem{aronov2014witness}
Boris Aronov, Muriel Dulieu, and Ferran Hurtado.
\newblock Witness rectangle graphs.
\newblock {\em Graphs and Combinatorics}, 30(4):827--846, 2014.
\newblock \href {https://doi.org/10.1007/s00373-013-1316-x} {\path{doi:10.1007/s00373-013-1316-x}}.

\bibitem{asquith2008ilp}
Matthew Asquith, Joachim Gudmundsson, and Damian Merrick.
\newblock An {ILP} for the metro-line crossing problem.
\newblock In James Harland and Prabhu Manyem, editors, {\em Proc. 14th Computing: The Australasian Theory Symposium (CATS'08)}, volume~77 of {\em {CRPIT}}, pages 49--56. Australian Computer Society, 2008.

\bibitem{BE.CM1.2014}
Michael~J. Bannister and David Eppstein.
\newblock Crossing minimization for 1-page and 2-page drawings of graphs with bounded treewidth.
\newblock In Christian Duncan and Antonios Symvonis, editors, {\em Proc. 22nd International Symposium on Graph Drawing and Network Visualization (GD'14)}, volume 8871 of {\em Lecture Notes in Computer Science}, pages 210--221. Springer, 2014.
\newblock \href {https://doi.org/10.1007/978-3-662-45803-7_18} {\path{doi:10.1007/978-3-662-45803-7_18}}.

\bibitem{BNRW.TSM.2023}
Lukas Barth, Benjamin Niedermann, Ignaz Rutter, and Matthias Wolf.
\newblock A topology-shape-metrics framework for ortho-radial graph drawing.
\newblock {\em Discrete {\&} Computational Geometry}, 70(4):1292--1355, 2023.
\newblock \href {https://doi.org/10.1007/s00454-023-00593-y} {\path{doi:10.1007/s00454-023-00593-y}}.

\bibitem{bb-crcl-04}
Michael Baur and Ulrik Brandes.
\newblock Crossing reduction in circular layouts.
\newblock In Juraj Hromkovič, Manfred Nagl, and Bernhard Westfechtel, editors, {\em Proc. 30th International Workshop on Graph-Theoretic Concepts in Computer Science (WG'04)}, volume 3353 of {\em LNCS}, pages 332--343. Springer Berlin Heidelberg, 2004.
\newblock \href {https://doi.org/10.1007/978-3-540-30559-0_28} {\path{doi:10.1007/978-3-540-30559-0_28}}.

\bibitem{bekos2008line}
Michael~A. Bekos, Michael Kaufmann, Katerina Potika, and Antonios Symvonis.
\newblock Line crossing minimization on metro maps.
\newblock In Seok{-}Hee Hong, Takao Nishizeki, and Wu~Quan, editors, {\em Proc. 15th International Symposium on Graph Drawing and Network Visualization (GD'07)}, volume 4875 of {\em Lecture Notes in Computer Science}, pages 231--242. Springer, 2007.
\newblock \href {https://doi.org/10.1007/978-3-540-77537-9_24} {\path{doi:10.1007/978-3-540-77537-9_24}}.

\bibitem{bnuw-miecw-07}
Marc Benkert, Martin Nöllenburg, Takeaki Uno, and Alexander Wolff.
\newblock Minimizing intra-edge crossings in wiring diagrams and public transportation maps.
\newblock In Michael Kaufmann and Dorothea Wagner, editors, {\em Proc. 14th International Symposium on Graph Drawing and Network Visualization (GD'06)}, volume 4372 of {\em Lecture Notes in Computer Science}, pages 270--281. Springer, 2006.
\newblock \href {https://doi.org/10.1007/978-3-540-70904-6_27} {\path{doi:10.1007/978-3-540-70904-6_27}}.

\bibitem{BK.BTG.1979}
Frank Bernhart and Paul~C. Kainen.
\newblock The book thickness of a graph.
\newblock {\em Journal of Combinatorial Theory, Series B}, 27(3):320--331, 1979.
\newblock \href {https://doi.org/10.1016/0095-8956(79)90021-2} {\path{doi:10.1016/0095-8956(79)90021-2}}.

\bibitem{Bod.TCA.2006}
Hans~L. Bodlaender.
\newblock Treewidth: Characterizations, applications, and computations.
\newblock In Fedor~V. Fomin, editor, {\em Proc. 32nd International Workshop on Graph-Theoretic Concepts in Computer Science (WG'06)}, volume 4271 of {\em Lecture Notes in Computer Science}, pages 1--14. Springer, 2006.
\newblock \href {https://doi.org/10.1007/11917496_1} {\path{doi:10.1007/11917496_1}}.

\bibitem{BK.ECA.1996}
Hans~L. Bodlaender and Ton Kloks.
\newblock Efficient and constructive algorithms for the pathwidth and treewidth of graphs.
\newblock {\em Journal of Algorithms}, 21(2):358--402, 1996.
\newblock \href {https://doi.org/10.1006/jagm.1996.0049} {\path{doi:10.1006/jagm.1996.0049}}.

\bibitem{BNT+.BLC.2024}
Annika Bonerath, Martin N{\"{o}}llenburg, Soeren Terziadis, Markus Wallinger, and Jules Wulms.
\newblock Boundary labeling in a circular orbit.
\newblock In Stefan Felsner and Karsten Klein, editors, {\em Proc. 32nd International Symposium on Graph Drawing and Network Visualization (GD'24)}, volume 320 of {\em LIPIcs}, pages 22:1--22:17. Schloss Dagstuhl – Leibniz-Zentrum für Informatik, 2024.
\newblock \href {https://doi.org/10.4230/LIPICS.GD.2024.22} {\path{doi:10.4230/LIPICS.GD.2024.22}}.

\bibitem{SUPPLEMENTARY_MATERIAL}
Alvin Chiu, Thomas Depian, David Eppstein, Michael~T. Goodrich, and Martin Nöllenburg.
\newblock Visualizing treewidth: Supplementary material, 2025.
\newblock Source Code; Last accessed: 2025-08-18.
\newblock \href {https://doi.org/10.17605/OSF.IO/QFZ5V} {\path{doi:10.17605/OSF.IO/QFZ5V}}.

\bibitem{CFK+.PA.2015}
Marek Cygan, Fedor~V. Fomin, Lukasz Kowalik, Daniel Lokshtanov, D{\'{a}}niel Marx, Marcin Pilipczuk, Michal Pilipczuk, and Saket Saurabh.
\newblock {\em Parameterized Algorithms}.
\newblock Springer, 2015.
\newblock \href {https://doi.org/10.1007/978-3-319-21275-3} {\path{doi:10.1007/978-3-319-21275-3}}.

\bibitem{di1999graph}
Giuseppe Di~Battista, Peter Eades, Roberto Tamassia, and Ioannis~G. Tollis.
\newblock {\em Graph Drawing: Algorithms for the Visualization of Graphs}.
\newblock Prentice Hall, 1999.

\bibitem{DiBTam-Algo-96}
Giuseppe Di~Battista and Roberto Tamassia.
\newblock On-line maintenance of triconnected components with {SPQR}-trees.
\newblock {\em Algorithmica}, 15(4):302--318, 1996.
\newblock \href {https://doi.org/10.1007/s004539900017} {\path{doi:10.1007/s004539900017}}.

\bibitem{didimo2019survey}
Walter Didimo, Giuseppe Liotta, and Fabrizio Montecchiani.
\newblock A survey on graph drawing beyond planarity.
\newblock {\em ACM Computing Surveys}, 52(1):4:1--4:37, 2019.
\newblock \href {https://doi.org/10.1145/3301281} {\path{doi:10.1145/3301281}}.

\bibitem{dobler_et_al:LIPIcs.GD.2024.31}
Alexander Dobler, Michael J{\"{u}}nger, Paul~J. J{\"{u}}nger, Julian Meffert, Petra Mutzel, and Martin N{\"{o}}llenburg.
\newblock Revisiting {ILP} models for exact crossing minimization in storyline drawings.
\newblock In Stefan Felsner and Karsten Klein, editors, {\em Proc. 32nd International Symposium on Graph Drawing and Network Visualization (GD'24)}, volume 320 of {\em LIPIcs}, pages 31:1--31:19. Schloss Dagstuhl – Leibniz-Zentrum für Informatik, 2024.
\newblock \href {https://doi.org/10.4230/LIPIcs.GD.2024.31} {\path{doi:10.4230/LIPIcs.GD.2024.31}}.

\bibitem{dujmovic2015genus}
Vida Dujmovi{\'c}, David Eppstein, and David~R. Wood.
\newblock Structure of graphs with locally restricted crossings.
\newblock {\em SIAM Journal on Discrete Mathematics}, 31(2):805--824, 2017.
\newblock \href {https://doi.org/10.1137/16M1062879} {\path{doi:10.1137/16M1062879}}.

\bibitem{doi:10.1137/S0097539702416141}
Vida Dujmovi{\'c}, Pat Morin, and David~R. Wood.
\newblock Layout of graphs with bounded tree-width.
\newblock {\em SIAM Journal on Computing}, 34(3):553--579, 2005.
\newblock \href {https://doi.org/10.1137/S0097539702416141} {\path{doi:10.1137/S0097539702416141}}.

\bibitem{DW.LLG.2004}
Vida Dujmovi{\'c} and David~R. Wood.
\newblock On linear layouts of graphs.
\newblock {\em Discrete Mathematics \& Theoretical Computer Science}, 6(2):339--358, 2004.
\newblock \href {https://doi.org/10.46298/DMTCS.317} {\path{doi:10.46298/DMTCS.317}}.

\bibitem{ellson2002graphviz}
John Ellson, Emden~R. Gansner, Eleftherios Koutsofios, Stephen~C. North, and Gordon Woodhull.
\newblock Graphviz - open source graph drawing tools.
\newblock In Petra Mutzel, Michael J{\"{u}}nger, and Sebastian Leipert, editors, {\em Proc. 9th International Symposium on Graph Drawing and Network Visualization (GD'01)}, volume 2265 of {\em Lecture Notes in Computer Science}, pages 483--484. Springer, 2001.
\newblock \href {https://doi.org/10.1007/3-540-45848-4_57} {\path{doi:10.1007/3-540-45848-4_57}}.

\bibitem{eppstein2000diameter}
David Eppstein.
\newblock Diameter and treewidth in minor-closed graph families.
\newblock {\em Algorithmica}, 27(3):275--291, 2000.
\newblock \href {https://doi.org/10.1007/s004530010020} {\path{doi:10.1007/s004530010020}}.

\bibitem{Epp.STG.2002}
David Eppstein.
\newblock Separating thickness from geometric thickness.
\newblock In Michael~T. Goodrich and Stephen~G. Kobourov, editors, {\em Proc. 10th International Symposium on Graph Drawing and Network Visualization (GD'02)}, volume 2528 of {\em Lecture Notes in Computer Science}, pages 150--162. Springer, 2002.
\newblock \href {https://doi.org/10.1007/3-540-36151-0_15} {\path{doi:10.1007/3-540-36151-0_15}}.

\bibitem{wikitree}
David Eppstein.
\newblock Tree decomposition.svg via {W}ikimedia {C}ommons, 2010.
\newblock Last accessed: 2025-08-14.
\newblock URL: \url{https://upload.wikimedia.org/wikipedia/commons/a/a7/Tree_decomposition.svg}.

\bibitem{Epp-NAMS-25}
David Eppstein.
\newblock {What is ... treewidth?}
\newblock {\em Notices of the AMS}, 72(2):172{--}175, 2025.
\newblock \href {https://doi.org/10.1090/noti3043} {\path{doi:10.1090/noti3043}}.

\bibitem{fp-mcmhatc-13}
Martin Fink and Sergey Pupyrev.
\newblock Metro-line crossing minimization: Hardness, approximations, and tractable cases.
\newblock In Stephen Wismath and Alexander Wolff, editors, {\em Proc. 21st International Symposium on Graph Drawing and Network Visualization (GD'13)}, volume 8242 of {\em Lecture Notes in Computer Science}, pages 328--339. Springer, 2013.
\newblock \href {https://doi.org/10.1007/978-3-319-03841-4_29} {\path{doi:10.1007/978-3-319-03841-4_29}}.

\bibitem{forster_et_al:LIPIcs.GD.2024.16}
Henry F{\"{o}}rster, Felix Klesen, Tim Dwyer, Peter Eades, Seok{-}Hee Hong, Stephen~G. Kobourov, Giuseppe Liotta, Kazuo Misue, Fabrizio Montecchiani, Alexander Pastukhov, and Falk Schreiber.
\newblock Graph{T}rials: Visual proofs of graph properties.
\newblock In Stefan Felsner and Karsten Klein, editors, {\em Proc. 32nd International Symposium on Graph Drawing and Network Visualization (GD'24)}, volume 320 of {\em LIPIcs}, pages 16:1--16:18. Schloss Dagstuhl - Leibniz-Zentrum f{\"{u}}r Informatik, 2024.
\newblock \href {https://doi.org/10.4230/LIPIcs.GD.2024.16} {\path{doi:10.4230/LIPIcs.GD.2024.16}}.

\bibitem{gk-icl-06}
Emden~R. Gansner and Yehuda Koren.
\newblock Improved circular layouts.
\newblock In Michael Kaufmann and Dorothea Wagner, editors, {\em Proc. 14th International Symposium on Graph Drawing and Network Visualization (GD'06)}, volume 4372 of {\em Lecture Notes in Computer Science}, pages 386--398. Springer, 2006.
\newblock \href {https://doi.org/10.1007/978-3-540-70904-6_37} {\path{doi:10.1007/978-3-540-70904-6_37}}.

\bibitem{gronemann2016crossing}
Martin Gronemann, Michael J{\"{u}}nger, Frauke Liers, and Francesco Mambelli.
\newblock Crossing minimization in storyline visualization.
\newblock In Yifan Hu and Martin N{\"{o}}llenburg, editors, {\em Proc. 24th International Symposium on Graph Drawing and Network Visualization (GD'16)}, volume 9801 of {\em Lecture Notes in Computer Science}, pages 367--381. Springer, 2016.
\newblock \href {https://doi.org/10.1007/978-3-319-50106-2_29} {\path{doi:10.1007/978-3-319-50106-2_29}}.

\bibitem{hoffmann2017two}
Michael Hoffmann, Lutz Kettner, and Stefan N{\"a}her.
\newblock Two computational geometry libraries: {LEDA} and {CGAL}.
\newblock In {\em Handbook of Discrete and Computational Geometry}, pages 1799--1831. Chapman and Hall/CRC, 2017.

\bibitem{junger2012graph}
Michael J{\"u}nger and Petra Mutzel.
\newblock {\em Graph Drawing Software}.
\newblock Springer, 2012.

\bibitem{KMN.EEB.2018}
Jonathan Klawitter, Tamara Mchedlidze, and Martin N{\"{o}}llenburg.
\newblock Experimental evaluation of book drawing algorithms.
\newblock In Fabrizio Frati and Kwan{-}Liu Ma, editors, {\em Proc. 25th International Symposium on Graph Drawing and Network Visualization (GD'17)}, volume 10692 of {\em Lecture Notes in Computer Science}, pages 224--238. Springer, 2017.
\newblock \href {https://doi.org/10.1007/978-3-319-73915-1_19} {\path{doi:10.1007/978-3-319-73915-1_19}}.

\bibitem{korach1993tree}
Ephraim Korach and Nir Solel.
\newblock Tree-width, path-widt, and cutwidth.
\newblock {\em Discrete Applied Mathematics}, 43(1):97--101, 1993.
\newblock \href {https://doi.org/10.1016/0166-218X(93)90171-J} {\path{doi:10.1016/0166-218X(93)90171-J}}.

\bibitem{kostitsyna2015minimizing}
Irina Kostitsyna, Martin N{\"{o}}llenburg, Valentin Polishchuk, Andr{\'{e}} Schulz, and Darren Strash.
\newblock On minimizing crossings in storyline visualizations.
\newblock In Emilio~Di Giacomo and Anna Lubiw, editors, {\em Proc. 23rd International Symposium on Graph Drawing and Network Visualization (GD'15)}, volume 9411 of {\em Lecture Notes in Computer Science}, pages 192--198. Springer, 2015.
\newblock \href {https://doi.org/10.1007/978-3-319-27261-0_16} {\path{doi:10.1007/978-3-319-27261-0_16}}.

\bibitem{KMMS.Car.2003}
Dieter Kratsch, Ross~M. McConnell, Kurt Mehlhorn, and Jeremy~P. Spinrad.
\newblock Certifying algorithms for recognizing interval graphs and permutation graphs.
\newblock In {\em Proc. 14th ACM-SIAM Symposium on Discrete Algorithms (SODA'03)}, pages 158--167. {ACM/SIAM}, 2003.

\bibitem{kratsch2006certifying}
Dieter Kratsch, Ross~M. McConnell, Kurt Mehlhorn, and Jeremy~P. Spinrad.
\newblock Certifying algorithms for recognizing interval graphs and permutation graphs.
\newblock {\em SIAM Journal on Computing}, 36(2):326--353, 2006.
\newblock \href {https://doi.org/10.1137/S0097539703437855} {\path{doi:10.1137/S0097539703437855}}.

\bibitem{LENHART2023114115}
William~J. Lenhart and Giuseppe Liotta.
\newblock Mutual witness {G}abriel drawings of complete bipartite graphs.
\newblock {\em Theoretical Computer Science}, 974:114115, 2023.
\newblock \href {https://doi.org/10.1016/j.tcs.2023.114115} {\path{doi:10.1016/j.tcs.2023.114115}}.

\bibitem{LL.MWG.2023}
William~J. Lenhart and Giuseppe Liotta.
\newblock Mutual witness gabriel drawings of complete bipartite graphs.
\newblock In Patrizio Angelini and Reinhard von Hanxleden, editors, {\em Proc. 30th Proc. 21st International Symposium on Graph Drawing and Network Visualization (GD'22)}, volume 13764 of {\em Lecture Notes in Computer Science}, pages 25--39. Springer, 2023.
\newblock \href {https://doi.org/10.1007/978-3-031-22203-0_3} {\path{doi:10.1007/978-3-031-22203-0_3}}.

\bibitem{MSJ.EST.2019}
Silviu Maniu, Pierre Senellart, and Suraj Jog.
\newblock An experimental study of the treewidth of real-world graph data.
\newblock In Pablo Barcel{\'{o}} and Marco Calautti, editors, {\em Proc. 22nd International Conference on Database Theory (ICDT'19)}, volume 127 of {\em LIPIcs}, pages 12:1--12:18. Schloss Dagstuhl – Leibniz-Zentrum für Informatik, 2019.
\newblock \href {https://doi.org/10.4230/LIPICS.ICDT.2019.12} {\path{doi:10.4230/LIPICS.ICDT.2019.12}}.

\bibitem{MKNF.Ncc.1987}
Sumio Masuda, Toshinobu Kashiwabara, Kazuo Nakajima, and Toshio Fujisawa.
\newblock On the {NP}-completeness of a computer network layout problem.
\newblock In {\em Proc. IEEE Intl. Symposium on Circuits and Systems}, pages 292--295, 1987.

\bibitem{MT.cfi.1992}
Jiři Matoušek and Robin Thomas.
\newblock On the complexity of finding iso- and other morphisms for partial k-trees.
\newblock {\em Discrete Mathematics}, 108(1–3):343--364, 1992.
\newblock \href {https://doi.org/10.1016/0012-365x(92)90687-b} {\path{doi:10.1016/0012-365x(92)90687-b}}.

\bibitem{MCCONNELL2011119}
Ross~M. McConnell, Kurt Mehlhorn, Stefan N{\"{a}}her, and Pascal Schweitzer.
\newblock Certifying algorithms.
\newblock {\em Computer Science Review}, 5(2):119--161, 2011.
\newblock \href {https://doi.org/10.1016/j.cosrev.2010.09.009} {\path{doi:10.1016/j.cosrev.2010.09.009}}.

\bibitem{MNS+.Cgp.1996}
Kurt Mehlhorn, Stefan N{\"{a}}her, Thomas Schilz, Stefan Schirra, Michael Seel, Raimund Seidel, and Christian Uhrig.
\newblock Checking geometric programs or verification of geometric structures.
\newblock In Sue Whitesides, editor, {\em Proc. 12th International Symposium on Computational Geometry (SoCG'96)}, SCG ’96, pages 159--165. {ACM}, 1996.
\newblock \href {https://doi.org/10.1145/237218.237344} {\path{doi:10.1145/237218.237344}}.

\bibitem{mehlhorn1996checking}
Kurt Mehlhorn, Stefan N{\"{a}}her, Michael Seel, Raimund Seidel, Thomas Schilz, Stefan Schirra, and Christian Uhrig.
\newblock Checking geometric programs or verification of geometric structures.
\newblock {\em Computational Geometry}, 12(1-2):85--103, 1999.
\newblock \href {https://doi.org/10.1016/S0925-7721(98)00036-4} {\path{doi:10.1016/S0925-7721(98)00036-4}}.

\bibitem{Mut-ICALP-03}
Petra Mutzel.
\newblock The {SPQR}-tree data structure in graph drawing.
\newblock In Jos C.~M. Baeten, Jan~Karel Lenstra, Joachim Parrow, and Gerhard~J. Woeginger, editors, {\em Proc. 30th International Colloquium on Automata, Languages and Programming (ICALP'03)}, volume 2719 of {\em Lecture Notes in Computer Science}, pages 34--46. Springer, 2003.
\newblock \href {https://doi.org/10.1007/3-540-45061-0\_4} {\path{doi:10.1007/3-540-45061-0\_4}}.

\bibitem{n-iamlc-10}
Martin N{\"{o}}llenburg.
\newblock An improved algorithm for the metro-line crossing minimization problem.
\newblock In David Eppstein and Emden~R. Gansner, editors, {\em Proc. 17th International Symposium on Graph Drawing and Network Visualization (GD'09)}, volume 5849 of {\em Lecture Notes in Computer Science}, pages 381--392. Springer, 2009.
\newblock \href {https://doi.org/10.1007/978-3-642-11805-0_36} {\path{doi:10.1007/978-3-642-11805-0_36}}.

\bibitem{PNBH.ERO.2012}
Sergey Pupyrev, Lev Nachmanson, Sergey Bereg, and Alexander~E. Holroyd.
\newblock Edge routing with ordered bundles.
\newblock In Marc~J. van Kreveld and Bettina Speckmann, editors, {\em Proc. 19th International Symposium on Graph Drawing and Network Visualization (GD'11)}, volume 7034 of {\em Lecture Notes in Computer Science}, pages 136--147. Springer, 2012.
\newblock \href {https://doi.org/10.1007/978-3-642-25878-7_14} {\path{doi:10.1007/978-3-642-25878-7_14}}.

\bibitem{pnbh-erwob-12}
Sergey Pupyrev, Lev Nachmanson, Sergey Bereg, and Alexander~E. Holroyd.
\newblock Edge routing with ordered bundles.
\newblock {\em Computational Geometry}, 52:18--33, 2016.
\newblock \href {https://doi.org/10.1016/j.comgeo.2015.10.005} {\path{doi:10.1016/j.comgeo.2015.10.005}}.

\bibitem{Sch.GCN.2013}
Marcus Schaefer.
\newblock The graph crossing number and its variants: A survey.
\newblock {\em The Electronic Journal of Combinatorics}, 1000, 2013.
\newblock Version 8.
\newblock \href {https://doi.org/10.37236/2713} {\path{doi:10.37236/2713}}.

\bibitem{tamassia2013handbook}
Roberto Tamassia, editor.
\newblock {\em Handbook of Graph Drawing and Visualization}.
\newblock Chapman and Hall/CRC, 2013.

\bibitem{van2017block}
Thomas~C. van Dijk, Martin Fink, Norbert Fischer, Fabian Lipp, Peter Markfelder, Alexander Ravsky, Subhash Suri, and Alexander Wolff.
\newblock Block crossings in storyline visualizations.
\newblock {\em Journal of Graph Algorithms and Applications}, 21(5):873--913, 2017.
\newblock \href {https://doi.org/10.7155/jgaa.00443} {\path{doi:10.7155/jgaa.00443}}.

\bibitem{van2017computing}
Thomas~C. van Dijk, Fabian Lipp, Peter Markfelder, and Alexander Wolff.
\newblock Computing storyline visualizations with few block crossings.
\newblock In Fabrizio Frati and Kwan{-}Liu Ma, editors, {\em Proc. 25th International Symposium on Graph Drawing and Network Visualization (GD'17)}, volume 10692 of {\em Lecture Notes in Computer Science}, pages 365--378. Springer, 2017.
\newblock \href {https://doi.org/10.1007/978-3-319-73915-1_29} {\path{doi:10.1007/978-3-319-73915-1_29}}.

\bibitem{w-dvss-02}
Martin Wattenberg.
\newblock Arc diagrams: Visualizing structure in strings.
\newblock In Pak~Chung Wong and Keith Andrews, editors, {\em Proc. 8th IEEE Information Visualization Conference (InfoVis'02)}, pages 110--116. {IEEE} Computer Society, 2002.
\newblock \href {https://doi.org/10.1109/INFVIS.2002.1173155} {\path{doi:10.1109/INFVIS.2002.1173155}}.

\end{thebibliography}

\appendix

\newpage
\section{Sample Drawings}
\label{app:drawings}
We provide in \Cref{fig:sample-dp,fig:sample-ls} further witness drawings computed by our algorithms.
\Cref{fig:sample-dp} shows four drawings computed by the DP.
\Cref{fig:sample-ls} shows the drawings for four decompositions computed by \texttt{Global conGreedy+ \& LS} and \texttt{Local conGreedy+ \& LS}.
We provide for all drawings the number of crossings in the parenthesis.
For further information on the respective graphs, we refer to the overview page of \emph{Sage}\footnote{\url{https://doc.sagemath.org/html/en/reference/graphs/sage/graphs/graph_generators.html}}, on which also the dataset is based on.

\begin{figure}
    \centering
    \includegraphics[page=1]{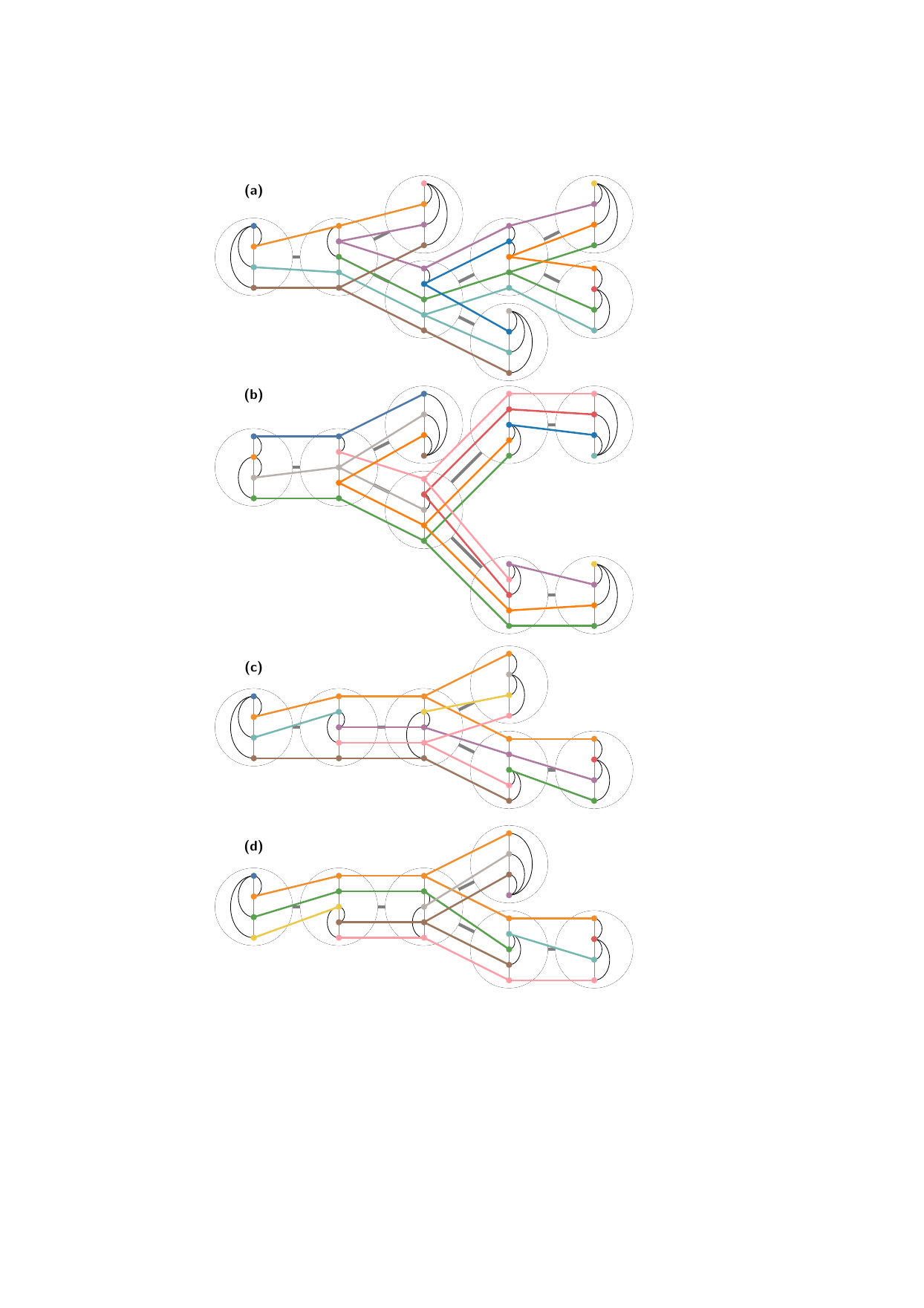}
    \caption{Sample drawings computed by our DP for decompositions of various graphs; we indicate the number of crossings in the parenthesis.
    \textbf{\textsf{(a)}} Bidiakis Cube (6), \textbf{\textsf{(b)}} Franklin Graph (9), 
    \textbf{\textsf{(c)}} Odd Graph 3 (5), and
    \textbf{\textsf{(d)}} Petersen Graph (5).}
    \label{fig:sample-dp}
\end{figure}

\begin{figure}
    \centering
    \includegraphics[page=1,height=\dimexpr\textheight-67.5pt\relax]{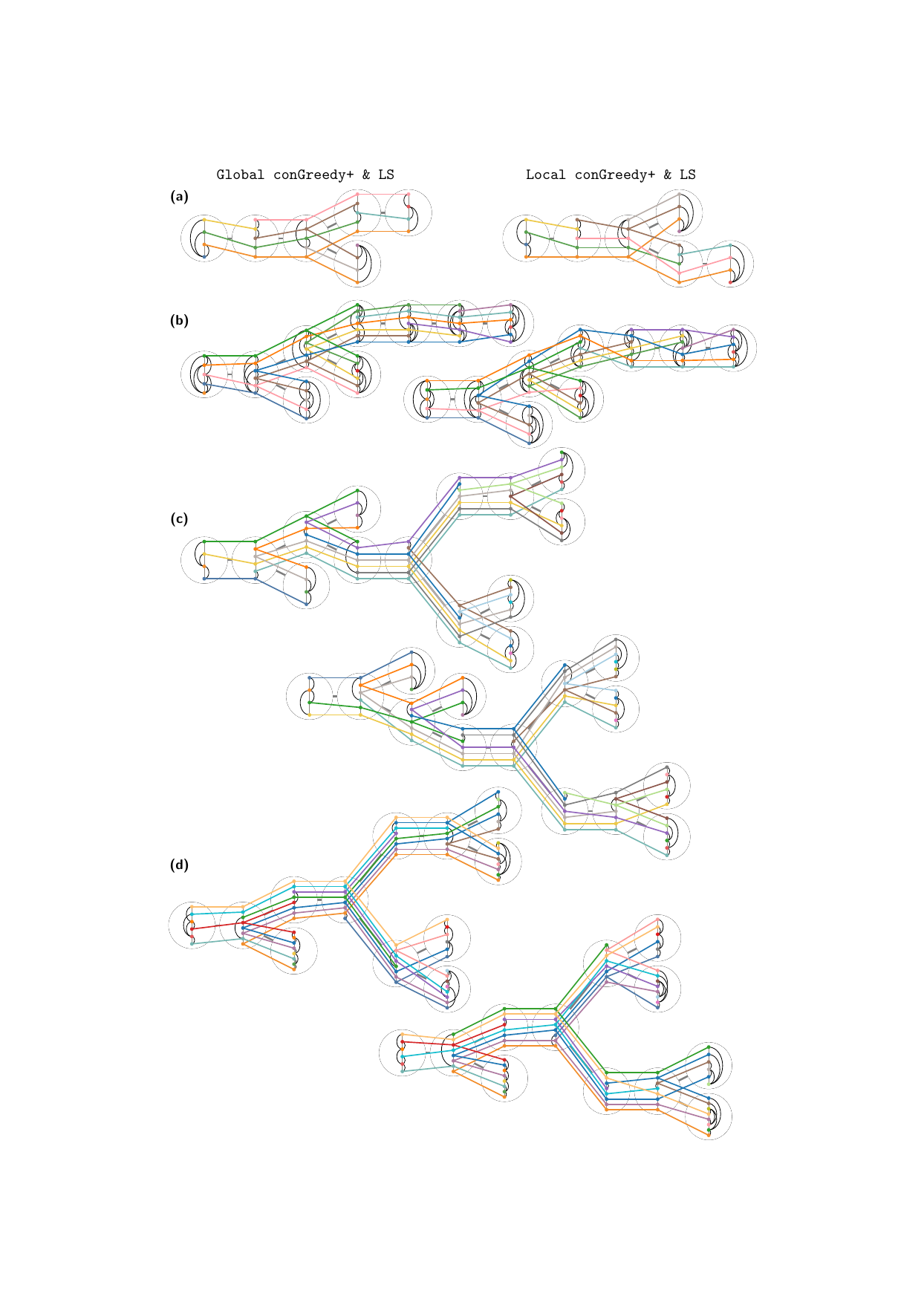}
    \caption{Sample drawings computed by the heuristics combined with the local search for decompositions of various graphs; we indicate the number of crossings in the parenthesis. A vertical bar (\texttt{Global conGreedy+ \& LS}$\vert$\texttt{Local conGreedy+ \& LS}) separates the number of crossings for the two algorithms.
    \textbf{\textsf{(a)}} Petersen Graph ($5\vert7$), \textbf{\textsf{(b)}} Poussin Graph ($52\vert51$), 
    \textbf{\textsf{(c)}} Nauru Graph ($39\vert35$), and
    \textbf{\textsf{(d)}} Coxeter Graph ($53\vert50$).}
    \label{fig:sample-ls}
\end{figure}

\end{document}